\documentclass[english,12pt,draftcls,onecolumn]{IEEEtran}
\ifCLASSINFOpdf
\else
\fi
\hyphenation{op-tical net-works semi-conduc-tor}
\usepackage{caption}
\usepackage{epsfig}
\usepackage{amsthm,amssymb,graphicx,graphicx,multirow,amsmath,color,mathtools}
\usepackage{tikz}
\usepackage{pgfplots}

\usepackage{epstopdf}
\usetikzlibrary{patterns}
\usepackage{verbatim}

\def\E{\mathbb{E}}

\def\ie{{\em i.e.}}

\def\EVM{\mathtt{EVM}}
\def\evm{\mathtt{EVM}}


\def\d{\mathrm{d}}

\def\pdf{\mathtt{PDF}}

{}
  \newtheorem{thm}{Theorem}
  \newtheorem{theorem}{Theorem}
  
  \newtheorem{cor}[thm]{Corollary}

\usepackage{subcaption}

\date{}
\begin{document}
%
\title{Error Vector Magnitude Analysis in Generalized Fading with Co-Channel Interference}
%
%
%
 
\author{\IEEEauthorblockN{Sudharsan Parthasarathy${}^{(a)}$, Suman Kumar${}^{(b)}$,  Radha Krishna Ganti${}^{(a)}$,\\ Sheetal Kalyani${}^{(a)}$, K. Giridhar${}^{(a)}$}\\
\IEEEauthorblockA{Department of Electrical Engineering,\\
${}^{(a)}$Indian Institute of Technology Madras, Chennai, India\\
${}^{(b)}$Indian Institute of Technology Ropar, Ropar, India\\
\{sudharsan.p, rganti, skalyani, giri\}@ee.iitm.ac.in, suman@iitrpr.ac.in} 
}

\maketitle


\begin{abstract} 
In this paper, we derive the data-aided Error Vector Magnitude (EVM) in an interference limited system  when both the desired signal and interferers  experience independent and non identically distributed $\kappa$-$\mu$ shadowed fading. Then it is analytically shown that the EVM is equal to the square root of number of interferers when the desired signal and interferers do not experience fading.  Further, EVM is derived in the presence of interference and noise, when the desired signal experiences $\kappa$-$\mu$ shadowed fading and the interferers experience independent and identical Nakagami fading. Moreover, using the properties of the special functions, the derived EVM expressions are also simplified for various special cases. 
\end{abstract}

\section{ \textbf{Introduction}}\label{intro}
The achievable performance of any transmission scheme in a wireless communication system depends on the instantaneous nature of the wireless channel \cite{Hanzo2004}. There are several ways of modelling the wireless channel, where these models depend on the specific nature of the fading considered. The traditional models of the wireless channel include the additive white Gaussian noise (AWGN),  Rayleigh,  Rician, Nakagami-m as well as the Nakagami-q faded channels.  These models have been characterized in several studies  \cite{Mahmoud2009,Aalo}. By contrast, studies of late  are employing  the recently proposed $\eta$-$\mu$ and $\kappa$-$\mu$ fading channels \cite{Yacoub2007}, which provides two significant advantages over   using the traditional channel models. Firstly, the classical channel models are special cases of 
these more sophisticated as well as generalized channel models. Secondly, they fit experimentally measured mobile radio propagation statistics better as compared to the other channel models \cite{Yacoub2007}. Recently, Paris has  proposed $\kappa$-$\mu$ shadowed fading and in \cite{moreno2015kappa} it has been shown that both  $\kappa$-$\mu$ and $\eta$-$\mu$ fading are the special cases of $\kappa$-$\mu$ shadowed fading. 

Bit Error Ratio (BER),  throughput and outage probability are some of the classical metrics used for comparing the performance of various wireless communication systems \cite{Hanzo2004, 7130668}. Error Vector Magnitude (EVM) is an alternate performance metric that is being increasingly employed in the wireless industry as well 
as in the research community  \cite{Agilent}, where the benefits of employing this performance metric are as follows:
  
\begin{enumerate}
\item  The type of degradation as well as its source can be identified when relying on EVM \cite{Agilent}. These impairments include the Local Oscillator's (LO) phase noise, LO frequency error, In-phase Quadrature-phase  (IQ) imbalance,  non-linearity and carrier leakage   \cite{Liu2014}, \cite{Georgiadis2004}. 

\item  EVM is a more convenient symbol-level performance metric than BER for a system that employs adaptive modulation \cite{Shafik2006b}. 

\item It has already become a part of wireless standards.  The Wideband Code Division  Multiple Access (W-CDMA) standard and 
 the  IEEE  802.11 family of  Wireless Local Area Network (WLAN) standards are examples of standards that have incorporated 
EVM-based measurement of the minimum  system performance \cite{Agilent}, \cite{Shafik2006b}.  

\end{enumerate}
The  $\eta$-$\mu$ and $\kappa$-$\mu$ fading channels'  BER,  outage probability as well as  capacity  have been  studied in the presence of interference in \cite{7130668, Paris2013} and references therein. 
On the other hand, the  EVM that may be attained in these wireless channels has not yet been characterized in the presence of interference. However,  the EVM  of generalized fading  channels without considering interference has been characterized in a few studies. 
Specifically, \cite{Gharaibeh2004} constitutes a seminal study in this field that first derived EVM for a digital communication system and related it with SNR. 
The study in  \cite{Shafik2006b} formulates  the EVM attained by a wireless system communicating   over an AWGN channel, 
while the study in \cite{Mahmoud2009} characterized  the  EVM  for   transmission over both 
AWGN as well as Rayleigh channels, when assuming non data-aided reception. Recently, authors of \cite{thomaserror} have derived the  EVM of a single input multiple output system relying on maximal ratio combining  in the presence of generalized fading without considering interference. To the best of our knowledge, none of the prior work in open literature have derived EVM by considering interference. Therefore, in this work we  derive the data-aided EVM  when both user signal and interferers experience generalized fading.

This paper has  the following  novel contributions:  
\begin{enumerate}

\item EVM expression is derived for an interference limited system when both the desired signal and interferers experience independent and non identical distributed (i.n.i.d)  $\kappa$-$\mu$ shadowed fading. The expression is expressed in terms of the Lauricella${}^{'}$s function of the fourth kind, which can be
easily evaluated numerically.
\item It is analytically shown that EVM is equal to the square root of number of interferers when the interferers and desired signal do not experience fading.
\item In a system that experiences both  noise and interference, EVM expression is derived when the desired signal experiences $\kappa$-$\mu$ shadowed fading and the interferers experience independent and  identically distributed (i.i.d) Nakagami fading. 

\item  Using the properties of the special functions, EVM expressions are simplified for various special cases, i.e., when the desired signal and interferers experience $\kappa$-$\mu$, $\eta$-$\mu$, Rayleigh, Rician, Nakagami-m fading channels.

\end{enumerate}

\section{System model}
We consider the following channel model with $L$ interferers, 
\begin{equation}
y(i)=D(i)h+ \sum\limits_{l=1}^L I_l(i)h_l + n(i), \text{ }\forall i=1,2\cdots N \label{conven_channel},
\end{equation}
where $h=a e^{j \theta}$, $h_l=a_l e^{j \theta_l}$.
Here $D(i)$, $I_l(i)$ and $y(i)$ are the desired, interfering and received symbols, respectively in the $i^{th}$ slot.  The imaginary and real components of complex noise term $n(i)$ are independent of each other and are modeled by a zero mean Gaussian distribution of variance $\frac{\sigma^2}{2}$. The fading gain of the desired channel and the interfering channel are  $a$ and $a_l$ respectively, and they are $\kappa$-$\mu$ shadowed distributed. It is also assumed that they  are constant over the block of symbols $1,\cdots, N$ \cite{thomaserror}.  The probability density function (pdf) of the fading power ($a^2$ or $a_l^2$) is given by  \cite{6594884}

\begin{equation}
\label{eqn:kappamupdf}
f_X(x)=\frac{\theta_1^{m-\mu}x^{\mu-1} } {\theta_2^m\Gamma(\mu)}e^{-\frac{x}{\theta_1}} {}_1 F_1 \left(m,\mu,\frac{(\theta_2-\theta_1)x}{\theta_1\theta_2}\right),
\end{equation}
where $\theta_1=\frac{\bar{X}}{\mu(1+\kappa)}$, $\theta_2=\frac{(\mu\kappa+m)\bar{X}}{\mu(1+\kappa)m}$
and $\mu=\frac{E^2\{X\}}{var\{X\}} \frac{1+2\kappa}{(1+\kappa)^2}$. The ratio of the total power in the dominant components to that in the  scattered waves is represented by $\kappa>0$ and the shadowing parameter is denoted by $m$. Here  ${}_1 F_1(.)$ denotes the Kummer confluent hypergeometric function \cite{srivastava1985multiple}.  In \cite{6594884}, \cite{moreno2015kappa} the authors have shown how popular fading distributions such as Rayleigh, Rician, Rician shadowed, Nakagami, $\kappa$-$\mu$, $\eta$-$\mu$, one sided Gaussian, Hoyt etc. can be obtained as special cases of $\kappa$-$\mu$ shadowed fading.

\subsection{EVM Measurement}
 EVM is defined as the root mean squared error between the transmitted symbol and the symbol received (after equalization) \cite{thomaserror}. 
\begin{equation}
\EVM=\E \left(\frac{1}{\sqrt{E_s}}\sqrt{\frac{1}{N}\sum\limits_{i=1}^N \left| \frac{y(i)}{h} -D(i) \right|^2 } \right)
\label{eqn:evmdefn}
\end{equation}
Substituting for $y(i)$ from \eqref{conven_channel} in \eqref{eqn:evmdefn},  EVM is 
\begin{equation}
\textstyle
\small
\E (\sqrt{\frac{\sum\limits_{i=1}^N \sum\limits_{l=1}^L I_l^{*}(i) h_l^{*} (\sum\limits_{j=1}^L I_j(i) h_j +n(i)) + n^{*}(i)( \sum\limits_{j=1}^L I_j(i) h_j +n(i) ) }{N E_s |h|^2}} )
\label{eqn:evmy}
\end{equation}

We  consider the symbols to be symmetric (mean 0) and average energy $E_s$, and symbols from different interferers to be independent, i.e., 
\begin{equation}
  \frac{\sum\limits_{i=1}^N I_l^{*}(i) h^{*}_l I_j(i) h_j }{N} =\begin{cases}
    0, & \text{if $l \neq j$}\\
    |h_l|^2 E_s, & \text{if $l=j$},
  \end{cases}
  \label{eqn:evmenergy}
  \end{equation}
  and 
  \begin{equation}
   \frac{\sum\limits_{i=1}^N I_l^{*}(i) h^{*}_l n(i) }{N} =0.
   \label{eqn:evmmean}
  \end{equation}
  The complex Gaussian noise samples are of zero mean and variance $\sigma^2$. Hence 
  \begin{equation}
   \frac{\sum\limits_{i=1}^N n^*(i) n(i) }{N} =\sigma^2.
   \label{eqn:evmnoise}
  \end{equation}
  Substituting  \eqref{eqn:evmenergy}, \eqref{eqn:evmmean}, \eqref{eqn:evmnoise} in \eqref{eqn:evmy}, and assuming average symbol energy $E_s=1$,

\ifCLASSOPTIONtwocolumn
\begin{equation*}
\EVM= \E \left( \sqrt{\frac{1}{|h|^2}  (\sum\limits_{l=1}^L  |h_l|^2 + \sigma^2 )} \right)
\end{equation*}
\else
\begin{equation*}
\EVM= \E_{h, h_{j \in [1, L]} } \left( \sqrt{\frac{1}{|h|^2}  (\sum\limits_{l=1}^L  |h_l|^2 + \sigma^2 )} \right)
\end{equation*}
\fi
Denoting $g_d=|h|^2$ and $g_I=\sum\limits_{l=1}^L |h_l|^2$, the above expression can be rewritten as
\begin{equation}
\evm = \int\limits_{0}^{\infty} \int\limits_{0}^{\infty} \sqrt{\frac{g_I+\sigma^2}{g_d}} f_{g_d}(g_d) f_{g_I}(g_I) \d g_d \d g_I. 
\label{eqn:evm}
\end{equation}
Note that here $g_d$ and $g_I$ are the desired fading power random variable (RV) and sum of interfering fading powers RV, respectively.

\section{EVM derivation}
In this Section, we derive the EVM expression for the case when only interferers are present and also for the case when both interferers and noise are present. Further, we simplify the EVM expression for various special cases. First, we consider the case when only interferers are present. 

\subsection{Interference limited system }
In this subsection, we derive the EVM expression when both desired and interfering signals experience independent and  non identical $\kappa$-$\mu$ shadowed fading of unit mean power. Let the fading parameters of desired signal power $g_d$ and $l^{th}$ interferer fading power $g_l=|h_l|^2$ be $(\mu, \kappa, m)$ and $(\mu_l, \kappa_l, m_l)$, respectively.  In other words, the pdf of $g_d$ is given as
\begin{equation}
f_{g_d}(x)=\frac{\theta_{1}^{m-\mu}x^{\mu-1} } {\theta_{2}^{m}\Gamma(\mu)}e^{-\frac{x}{\theta_{1}}} {}_1 F_1 \left(m,\mu,\frac{(\theta_{2}-\theta_{1})x}{\theta_{1}\theta_{2}}\right),
\label{eqn:gdpdf}
\end{equation}
where $\theta_{1}=\frac{1}{\mu(1+\kappa)}$, $\theta_{2}=\frac{\mu \kappa +m}{\mu(1+\kappa)m}$.
Similarly the pdf of $g_l$ is 
\begin{equation*}
f_{g_l}(x)=\frac{\theta_{1l}^{m_l-\mu _l}x^{\mu _l-1} } {\theta_{2l}^{m_l}\Gamma(\mu _l)}  \frac{{}_1 F_1 (m_l,\mu _l,\frac{(\theta_{2l}-\theta_{1l}) x}{\theta_{1l}\theta_{2l}})}{e^{\frac{x}{\theta_{1l}}}},
\end{equation*}
where $\theta_{1l}=\frac{1}{\mu_l(1+\kappa_l)}$, $\theta_{2l}=\frac{\mu_l \kappa_l +m_l}{\mu_l(1+\kappa_l)m_l}$. Note that in the EVM expression given in \eqref{eqn:evm}, $g_d$ is $\kappa$-$\mu$ shadowed power RV and $g_I=\sum\limits_{l=1}^L |h_l|^2$, i.e., $g_I$ is the sum of independent and non identical $\kappa$-$\mu$ shadowed power RV. In order to derive EVM, we first need to know the pdf of $g_I$. The pdf of $g_I=\sum\limits_{l=1}^L |h_l|^2$, i.e., the sum of independent and non identical $\kappa$-$\mu$ shadowed power RV is given by \cite{6594884}
\ifCLASSOPTIONtwocolumn
 \begin{equation*}
f_{g_I}(x)=\Phi_2^{2 L} (\mu_1-m_1,\cdots, \mu_L-m_L,m_1,\cdots,m_L; \sum\limits_{l=1}^L \mu_l ; 
\end{equation*}
\begin{equation*}
 -x \mu_1 (1+\kappa_1),\cdots,-x \mu_L (1+\kappa_L), \frac{\mu_1 m_1 (1+\kappa_1) x}{-(\mu_1 \kappa_1 +m_1)},\cdots,  
\end{equation*}
\begin{equation}
\frac{\mu_L m_L (1+\kappa_L) x}{-(\mu_L \kappa_L +m_L)} )  (\prod\limits_{l=1}^L   \frac{\mu_l^{\mu_l} m_l^{m_l} (1+\kappa_l)^{\mu_l} }{(\mu_l \kappa_l +m_l)^{m_l}}   )\frac{x^{(\sum\limits_{l=1}^L \mu_l -1)}}{\Gamma(\sum\limits_{l=1}^L \mu_l)} 
\label{eqn:gIpdf}
\end{equation}
\else
\begin{equation*}
f_{g_I}(x)=\Phi_2^{2 L} (\mu_1-m_1,\cdots, \mu_L-m_L,m_1,\cdots,m_L; \sum\limits_{l=1}^L \mu_l ; -x \mu_1 (1+\kappa_1),\cdots,-x \mu_L (1+\kappa_L), 
\end{equation*}
\begin{equation}
\frac{\mu_1 m_1 (1+\kappa_1) x}{-(\mu_1 \kappa_1 +m_1)},\cdots, \frac{\mu_L m_L (1+\kappa_L) x}{-(\mu_L \kappa_L +m_L)} )   \left(\prod\limits_{l=1}^L   \frac{\mu_l^{\mu_l} m_l^{m_l} (1+\kappa_l)^{\mu_l} }{(\mu_l \kappa_l +m_l)^{m_l}}   \right)\frac{x^{(\sum\limits_{l=1}^L \mu_l -1)}}{\Gamma(\sum\limits_{l=1}^L \mu_l)}  ,  
\label{eqn:gIpdf}
\end{equation}
\fi
 where $\Phi_2^{2L}(.)$ is the confluent Lauricella function \cite{srivastava1985multiple}. We can now substitute the pdf of $g_d$ given in \eqref{eqn:gdpdf} and the pdf of $g_I$ given in \eqref{eqn:gIpdf} into \eqref{eqn:evm} for deriving the EVM expression.

\begin{theorem}
\label{thm1}
The EVM of an interference limited system when both the desired signal and interferers experience i.n.i.d. $\kappa$-$\mu$ shadowed fading is given by 

\ifCLASSOPTIONtwocolumn

\begin{equation*}
 \frac{ {}_2F_1 (\mu - 0.5, m, \mu, \frac{\mu \kappa}{m+ \mu \kappa} ) \Gamma(\mu-0.5)\Gamma(\sum\limits_{l=1}^L \mu_l+0.5 )}{\prod\limits_{l=1}^L  \theta_{1l}^{\mu_l-m_l} \theta_{2l}^{m_l} \theta_1^{\mu-m} \theta_2^{m} \theta_1^{0.5-\mu}  \theta_{11}^{-\sum\limits_{l=1}^L \mu_l-0.5} \Gamma(\mu) \Gamma(\sum\limits_{l=1}^L \mu_l)} \times
\end{equation*}
\begin{equation*}
F_D^{(2L-1)}(\sum\limits_{l=1}^L \mu_l+0.5, \mu_2-m_2,\cdots, \mu_L-m_L,m_1,\cdots, m_L,
\end{equation*}
\begin{equation}
1-\frac{\theta_{11}}{\theta_{12}},\cdots,1-\frac{\theta_{11}}{\theta_{1L}},1-\frac{\theta_{11}}{\theta_{21}},\cdots,1-\frac{\theta_{11}}{\theta_{2L}})
\label{eqn:evmkappamushadowedint}
\end{equation}

\else

\begin{equation*}
\small
\prod\limits_{l=1}^L  \theta_{1l}^{m_l-\mu_l} \theta_{2l}^{-m_l} \theta_1^{m-\mu} \theta_2^{-m} \theta_1^{\mu-0.5}  {}_2F_1 (\mu - 0.5, m, \mu, \frac{\mu \kappa}{m+ \mu \kappa} ) \theta_{11}^{\sum\limits_{l=1}^L \mu_l+0.5} \frac{\Gamma(\mu-0.5)\Gamma(\sum\limits_{l=1}^L \mu_l+0.5 )}{\Gamma(\mu) \Gamma(\sum\limits_{l=1}^L \mu_l)} \times
\end{equation*}
\begin{equation}
\textstyle
F_D^{(2L-1)}(\sum\limits_{l=1}^L \mu_l+0.5, \mu_2-m_2,.,\mu_L-m_L,m_1,., m_L,\sum\limits_{l=1}^L \mu_l, 1-\frac{\theta_{11}}{\theta_{12}},.,1-\frac{\theta_{11}}{\theta_{1L}},1-\frac{\theta_{11}}{\theta_{21}},.,1-\frac{\theta_{11}}{\theta_{2L}}),
\label{eqn:evmkappamushadowedint}
\end{equation}

\fi

where $\mu>0.5$, $\theta_{1l}=\frac{1}{\mu_l(1+\kappa_l)}$, $\theta_{2l}=\frac{\mu_l \kappa_l +m_l}{\mu_l(1+\kappa_l)m_l}$, $\theta_{1}=\frac{1}{\mu(1+\kappa)}$, $\theta_{2}=\frac{\mu \kappa +m}{\mu(1+\kappa)m}$. Here $F_D^{(2L-1)}(.)$ is the Lauricella's function of the fourth kind \cite{exton1976multiple} and ${}_2F_1(.)$ is the Gauss hypergeometric function \cite{srivastava1985multiple}.  
\end{theorem}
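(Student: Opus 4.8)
The plan is to use that in an interference-limited system $\sigma^2=0$, so the integrand of \eqref{eqn:evm} factorizes and, by the independence of $g_d$ and $g_I$,
\[
\evm=\left(\int_0^\infty g_d^{-1/2}f_{g_d}(g_d)\,\d g_d\right)\left(\int_0^\infty g_I^{1/2}f_{g_I}(g_I)\,\d g_I\right)=\E\big[g_d^{-1/2}\big]\,\E\big[g_I^{1/2}\big].
\]
I would evaluate the two one-dimensional integrals separately and multiply.

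For the desired part, substituting \eqref{eqn:gdpdf} and pulling out the constants leaves $\int_0^\infty x^{\mu-3/2}e^{-x/\theta_1}\,{}_1F_1\!\big(m;\mu;\tfrac{(\theta_2-\theta_1)x}{\theta_1\theta_2}\big)\,\d x$, which I evaluate with the classical Kummer integral $\int_0^\infty x^{s-1}e^{-px}\,{}_1F_1(a;b;\lambda x)\,\d x=\Gamma(s)p^{-s}\,{}_2F_1(s,a;b;\lambda/p)$ at $s=\mu-\tfrac12$, $p=1/\theta_1$, $\lambda=(\theta_2-\theta_1)/(\theta_1\theta_2)$. Convergence at the origin is exactly $s>0$, i.e. $\mu>\tfrac12$, which is the hypothesis of the theorem. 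Since $\lambda/p=1-\theta_1/\theta_2=\mu\kappa/(m+\mu\kappa)$, this gives $\E[g_d^{-1/2}]=\theta_1^{m-\mu}\theta_2^{-m}\theta_1^{\mu-1/2}\,\tfrac{\Gamma(\mu-1/2)}{\Gamma(\mu)}\,{}_2F_1(\mu-\tfrac12,m;\mu;\tfrac{\mu\kappa}{m+\mu\kappa})$, which already accounts for all the $\theta_1,\theta_2,{}_2F_1$ and $\Gamma(\mu)$ factors in \eqref{eqn:evmkappamushadowedint}.

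The interference part is the crux: the pdf \eqref{eqn:gIpdf} involves the confluent Lauricella function $\Phi_2^{(2L)}$, and a naive term-by-term integration of $x^{\sum_l\mu_l-1/2}\Phi_2^{(2L)}(\cdots;-x/\theta_{11},\dots,-x/\theta_{2L})$ does not converge. The observation that unlocks it is that the parameters $\alpha_j$ of this $\Phi_2^{(2L)}$ sum to its parameter $\gamma$, namely $\sum_l(\mu_l-m_l)+\sum_l m_l=\sum_l\mu_l$, so that the reduction
\[
\Phi_2^{(n)}(\alpha_1,\dots,\alpha_n;\gamma;z_1,\dots,z_n)=e^{z_1}\,\Phi_2^{(n-1)}(\alpha_2,\dots,\alpha_n;\gamma;z_2-z_1,\dots,z_n-z_1),\qquad \gamma=\sum\nolimits_j\alpha_j,
\]
applies with $z_1=-x/\theta_{11}$; this follows by applying the first-variable Kummer-type transformation of $\Phi_2^{(n)}$, which replaces $\alpha_1$ by $\gamma-\sum_j\alpha_j$, so that under our constraint this parameter vanishes and the first variable drops out. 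The reduction factors $e^{-x/\theta_{11}}$ out of the integrand, turns the remaining function into $\Phi_2^{(2L-1)}$ in the shifted variables $x(\theta_{11}^{-1}-\theta_{1l}^{-1})$ and $x(\theta_{11}^{-1}-\theta_{2l}^{-1})$, and makes the integrand exponentially damped. Integrating term by term with $\int_0^\infty x^{s-1+K}e^{-x/\theta_{11}}\,\d x=\Gamma(s+K)\theta_{11}^{s+K}$ at $s=\sum_l\mu_l+\tfrac12$, and resumming via $\Gamma(s+K)/\Gamma(s)=(s)_K$, identifies $\E[g_I^{1/2}]$ as $\big(\prod_l\theta_{1l}^{m_l-\mu_l}\theta_{2l}^{-m_l}\big)\tfrac{\Gamma(\sum_l\mu_l+1/2)}{\Gamma(\sum_l\mu_l)}\theta_{11}^{\sum_l\mu_l+1/2}$ times $F_D^{(2L-1)}\!\big(\sum_l\mu_l+\tfrac12;\mu_2-m_2,\dots,\mu_L-m_L,m_1,\dots,m_L;\sum_l\mu_l;1-\tfrac{\theta_{11}}{\theta_{12}},\dots,1-\tfrac{\theta_{11}}{\theta_{2L}}\big)$, the arguments emerging precisely as $(\theta_{11}^{-1}-\theta_{1l}^{-1})\theta_{11}=1-\theta_{11}/\theta_{1l}$ and likewise for $\theta_{2l}$; here $\prod_l\theta_{1l}^{m_l-\mu_l}\theta_{2l}^{-m_l}$ is exactly the constant $\prod_l\mu_l^{\mu_l}m_l^{m_l}(1+\kappa_l)^{\mu_l}/(\mu_l\kappa_l+m_l)^{m_l}$ in \eqref{eqn:gIpdf}.

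Finally I would multiply $\E[g_d^{-1/2}]$ by $\E[g_I^{1/2}]$ and collect the powers of the $\theta$'s and the $\Gamma$ factors, which reproduces \eqref{eqn:evmkappamushadowedint}. I expect the only genuine difficulty to be the reduction of $\Phi_2^{(2L)}$ (which is what makes the otherwise divergent half-moment integral tractable) together with the careful bookkeeping of all parameter and argument substitutions in that step; the remaining computations are routine.
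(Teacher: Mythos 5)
Your proposal is correct and follows essentially the same route as the paper's proof: factorize the double integral into $\E[g_d^{-1/2}]\,\E[g_I^{1/2}]$, evaluate the first via the Kummer integral giving the ${}_2F_1$, and for the second exploit that the $\Phi_2^{(2L)}$ parameters sum to $\gamma$ so the Kummer-type transformation extracts $e^{-x/\theta_{11}}$ and collapses it to $\Phi_2^{(2L-1)}$, after which the integral is the standard representation of $F_D^{(2L-1)}$. Your term-by-term resummation is just a series-level restatement of that integral representation, and your convergence condition $\mu>\tfrac12$ and all parameter/argument bookkeeping match the paper.
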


\begin{proof}
Refer Appendix \ref{app:theorem1}
\end{proof}

Note that the EVM expression given in \eqref{eqn:evmkappamushadowedint} is the most general expression for an interference limited scenario as both the desired signal and interferers experience $\kappa$-$\mu$ shadowed fading. The expression is given in terms of Gauss hypergeometric function ${}_2F_1(.)$ and Lauricella's function of the fourth kind $F_D^{(L)}(.)$, which can be easily computed. 

Now, we simplify the EVM expression given in \eqref{eqn:evmkappamushadowedint} for various special cases. We start with the case when interfering signals experience i.i.d.  $\kappa$-$\mu$ shadowed fading of parameters $(\mu_I, \kappa_I, m_I)$.

\begin{cor}
\label{cor1}
The EVM  when both the desired signal and interferers experience i. i. d. $\kappa$-$\mu$ shadowed fading  is given by
\ifCLASSOPTIONtwocolumn
\begin{equation*}
\frac{\Gamma(\mu-0.5)}{\Gamma(\mu)\sqrt{\frac{\theta_1}{\theta_{1I}}}} {}_2F_1 \left(\mu -0.5,m,\mu,\frac{\mu \kappa}{m+ \mu \kappa} \right) \frac{\Gamma(L \mu _I+0.5 )}{\Gamma(L \mu_I)} \times
\end{equation*}
\begin{equation}
\frac{{}_2F_1 (L \mu _I + 0.5, L m_I, L \mu _I, \frac{\mu_ I  \kappa _I}{\mu _I \kappa _I +m _I} )}{ (   \frac{\theta_{2}}{\theta_{1}} )^{m} }  \left(   \frac{\theta_{1I}}{\theta_{2I}} \right)^{m_I L}     \label{eqn:evmkappamushadowedintiid}
\end{equation}
\else
\begin{equation}
\frac{\Gamma(\mu-0.5)}{\Gamma(\mu)\sqrt{\frac{\theta_1}{\theta_{1I}}}} {}_2F_1 (\mu -0.5,m,\mu,\frac{\mu \kappa}{m+ \mu \kappa} )   \frac{ \Gamma(L \mu _I+0.5 ) {}_2F_1 (L \mu _I + 0.5, L m_I, L \mu _I, \frac{\mu_ I  \kappa _I}{\mu _I \kappa _I +m _I} )}{ \Gamma(L \mu_I) (   \frac{\theta_{2}}{\theta_{1}} )^{m} (\frac{\theta_{2I}}{\theta_{1I}})^{m_I L} }      \label{eqn:evmkappamushadowedintiid}
\end{equation}
 \fi
where  $\mu > 0.5$, $\theta_{1I}=\frac{1}{\mu_I(1+\kappa_I)}$, $\theta_{2I}=\frac{\mu_I \kappa_I +m_I }{\mu_I(1+\kappa_I)m_I}$, $\theta_{1}=\frac{1}{\mu(1+\kappa)}$, $\theta_{2}=\frac{\mu \kappa +m}{\mu(1+\kappa)m}$. 
\end{cor}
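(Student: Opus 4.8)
The plan is to derive Corollary~\ref{cor1} as a direct specialization of Theorem~\ref{thm1}. Setting $\mu_l=\mu_I$, $\kappa_l=\kappa_I$, $m_l=m_I$ for every $l$ in \eqref{eqn:evmkappamushadowedint} forces $\theta_{1l}=\theta_{1I}$ and $\theta_{2l}=\theta_{2I}$, so the first $L-1$ arguments $1-\theta_{11}/\theta_{1l}$ of the Lauricella function $F_D^{(2L-1)}$ vanish, while its last $L$ arguments all equal $1-\theta_{1I}/\theta_{2I}=\tfrac{\mu_I\kappa_I}{\mu_I\kappa_I+m_I}$. I then invoke two elementary reductions of $F_D$, each immediate from its multiple series: (i) a variable with argument $0$ may be deleted, lowering the order by one; and (ii) $F_D^{(k)}(a;b_1,\dots,b_k;c;x,\dots,x)={}_2F_1\!\big(a,\,b_1+\dots+b_k;\,c;\,x\big)$ (the multinomial/Vandermonde collapse). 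Deleting the $L-1$ zero arguments by (i) leaves an $F_D^{(L)}$ in which the surviving parameters are $m_1,\dots,m_L$ (all equal to $m_I$) against $L$ equal arguments; applying (ii) turns it into ${}_2F_1\!\big(L\mu_I+0.5,\,Lm_I;\,L\mu_I;\,\tfrac{\mu_I\kappa_I}{\mu_I\kappa_I+m_I}\big)$, which is exactly the inner hypergeometric factor of \eqref{eqn:evmkappamushadowedintiid}.

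It then remains to simplify the prefactor of \eqref{eqn:evmkappamushadowedint} under the same substitution. Using $\prod_{l=1}^L\theta_{1l}^{m_l-\mu_l}\theta_{2l}^{-m_l}=\theta_{1I}^{L(m_I-\mu_I)}\theta_{2I}^{-Lm_I}$, $\theta_{11}^{\sum_l\mu_l+0.5}=\theta_{1I}^{L\mu_I+0.5}$ and $\sum_l\mu_l=L\mu_I$, the powers of $\theta_{1I}$ combine to $\theta_{1I}^{Lm_I+1/2}$ and $\theta_1^{m-\mu}\theta_1^{\mu-0.5}$ combines to $\theta_1^{m-1/2}$; regrouping into $(\theta_1/\theta_2)^m$, $(\theta_{1I}/\theta_{2I})^{Lm_I}$ and $\sqrt{\theta_{1I}/\theta_1}=1/\sqrt{\theta_1/\theta_{1I}}$, while leaving the quotient $\Gamma(\mu-0.5)\Gamma(L\mu_I+0.5)/\big(\Gamma(\mu)\Gamma(L\mu_I)\big)$ and the factor ${}_2F_1(\mu-0.5,m,\mu,\tfrac{\mu\kappa}{m+\mu\kappa})$ untouched, the prefactor becomes precisely the coefficient in \eqref{eqn:evmkappamushadowedintiid}. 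Multiplying it by the collapsed ${}_2F_1$ from the first step gives the stated expression, and the hypothesis $\mu>0.5$ is inherited verbatim from Theorem~\ref{thm1}.

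As a cross-check, one may bypass Theorem~\ref{thm1} entirely: a sum of $L$ i.i.d.\ unit-mean $\kappa$-$\mu$ shadowed powers with parameters $(\mu_I,\kappa_I,m_I)$ is itself $\kappa$-$\mu$ shadowed with parameters $(L\mu_I,\kappa_I,Lm_I)$ and mean $L$ (equivalently $\theta_{1I},\theta_{2I}$ unchanged), since in \eqref{eqn:gIpdf} the repeated arguments collapse $\Phi_2^{2L}$ to a single confluent form of type \eqref{eqn:kappamupdf} with $(\mu,m)\mapsto(L\mu_I,Lm_I)$; then, as $\sigma^2=0$ and $g_d,g_I$ are independent, \eqref{eqn:evm} factors as $\evm=\E[g_d^{-1/2}]\,\E[g_I^{1/2}]$, and each factor is a fractional moment of a $\kappa$-$\mu$ shadowed RV obtained from $\int_0^\infty x^{s-1}e^{-x/\theta_1}\,{}_1F_1(a,\mu,\lambda x)\,\d x=\Gamma(s)\,\theta_1^{s}\,{}_2F_1(a,s,\mu,\lambda\theta_1)$, evaluated at $s=\mu-\tfrac12$ and $s=L\mu_I+\tfrac12$ respectively. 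The main difficulty is not conceptual but clerical: making the assorted powers of $\theta_1,\theta_{1I},\theta_2,\theta_{2I}$ and the $\Gamma$-factors recombine exactly into the compact form of \eqref{eqn:evmkappamushadowedintiid}, and — in the $F_D$ route — being careful that after deleting the $L-1$ zero-argument variables it is the parameters $m_1,\dots,m_L$, not the $\mu_l-m_l$, that remain paired with the equal arguments before reduction (ii) is applied.
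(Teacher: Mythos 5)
Your proposal is correct and follows essentially the same route as the paper: specialize Theorem~\ref{thm1} to identical interferer parameters, drop the $L-1$ zero-argument variables of $F_D^{(2L-1)}$ (which carry the parameters $\mu_l-m_l$), collapse the remaining $F_D^{(L)}$ with equal arguments $\tfrac{\mu_I\kappa_I}{\mu_I\kappa_I+m_I}$ and parameters $m_I,\dots,m_I$ into ${}_2F_1(L\mu_I+0.5,Lm_I,L\mu_I,\cdot)$, and recombine the $\theta$-powers into the stated prefactor. The closure-property cross-check you sketch is a nice sanity check but is not part of the paper's argument.
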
 
\begin{proof}
Refer Appendix \ref{app:cor1}.
\end{proof}

The above EVM expression  is in terms of only the Gauss hypergeometric function which can be easily computed. Next we derive the EVM of an interference limited system when both desired signal and interferers experience $\kappa$-$\mu$ fading.
 
 \begin{cor}
When the desired signal experiences $\kappa$-$\mu$ fading of parameters ($\kappa$, $\mu$) and is independent of interferers that experience i. i. d $\kappa$-$\mu$ fading of parameters ($\kappa_I$, $\mu_I$), EVM is
\ifCLASSOPTIONtwocolumn
\begin{equation*}
\frac{  \Gamma(\mu-0.5)  \Gamma(L \mu_I +0.5 ) }{\Gamma(L \mu_I) \Gamma(\mu)} {}_1F_1(0.5, \mu, -\kappa \mu)   \times
\end{equation*}
\begin{equation}
{}_1F_1(-0.5, L \mu_I, -L \kappa_I \mu_I) \sqrt{\frac{\mu(1+\kappa)} {\mu_I(1+\kappa_I)}} ,
\label{eqn:evmkappamuintiid}
\end{equation}
\else
\begin{equation}
\frac{  \Gamma(\mu-0.5)  \Gamma(L \mu_I +0.5 ) }{\Gamma(L \mu_I) \Gamma(\mu)} {}_1F_1(0.5, \mu, -\kappa \mu) {}_1F_1(-0.5, L \mu_I, -L \kappa_I \mu_I)  \sqrt{\frac{\mu(1+\kappa)} {\mu_I(1+\kappa_I)}},
\label{eqn:evmkappamuintiid}
\end{equation}
\fi

where  $\mu > 0.5$, ${}_1F_1$ is the Kummer confluent hypergeometric function \cite{srivastava1985multiple}.
\label{cor:kappamu} 
 \end{cor}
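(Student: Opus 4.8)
The plan is to derive this corollary from Corollary~\ref{cor1} by a limiting argument, rather than from \eqref{eqn:evm} directly. Since the $\kappa$-$\mu$ distribution is the $m\to\infty$ limit of the $\kappa$-$\mu$ shadowed distribution (with $\mu$ and $\kappa$ held fixed), I would let $m\to\infty$ in the desired-signal parameters and $m_I\to\infty$ in the interferer parameters of the i.i.d.\ expression \eqref{eqn:evmkappamushadowedintiid} and pass to the limit.

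Three elementary limits do all the work. First, once $\theta_1=1/(\mu(1+\kappa))$ and $\theta_{1I}=1/(\mu_I(1+\kappa_I))$ are substituted, the prefactor $1/\sqrt{\theta_1/\theta_{1I}}$ no longer involves $m$ or $m_I$ and equals $\sqrt{\mu(1+\kappa)/(\mu_I(1+\kappa_I))}$, which is the surd appearing in \eqref{eqn:evmkappamuintiid}. Second, $\theta_2/\theta_1=1+\mu\kappa/m$ and $\theta_{2I}/\theta_{1I}=1+\mu_I\kappa_I/m_I$, so $(\theta_2/\theta_1)^{m}\to e^{\mu\kappa}$ and $(\theta_{2I}/\theta_{1I})^{m_I L}\to e^{L\mu_I\kappa_I}$. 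Third, by the confluence relation ${}_2F_1(a,b;c;w)\to{}_1F_1(a;c;\zeta)$ as $b\to\infty$ with $bw\to\zeta$ \cite{srivastava1985multiple}, applied with $b=m$ and $m\cdot\mu\kappa/(m+\mu\kappa)\to\mu\kappa$ (respectively $b=Lm_I$ and $Lm_I\cdot\mu_I\kappa_I/(\mu_I\kappa_I+m_I)\to L\mu_I\kappa_I$), the two Gauss hypergeometric factors of \eqref{eqn:evmkappamushadowedintiid} tend to ${}_1F_1(\mu-0.5;\mu;\mu\kappa)$ and ${}_1F_1(L\mu_I+0.5;L\mu_I;L\mu_I\kappa_I)$.

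It then remains to recast the answer in the stated form. Kummer's transformation ${}_1F_1(a;c;z)=e^z\,{}_1F_1(c-a;c;-z)$ gives ${}_1F_1(\mu-0.5;\mu;\mu\kappa)=e^{\mu\kappa}\,{}_1F_1(0.5;\mu;-\mu\kappa)$ and ${}_1F_1(L\mu_I+0.5;L\mu_I;L\mu_I\kappa_I)=e^{L\mu_I\kappa_I}\,{}_1F_1(-0.5;L\mu_I;-L\mu_I\kappa_I)$. The exponentials $e^{\mu\kappa}$ and $e^{L\mu_I\kappa_I}$ cancel exactly the surviving denominator factors $(\theta_2/\theta_1)^{m}\to e^{\mu\kappa}$ and $(\theta_{2I}/\theta_{1I})^{m_I L}\to e^{L\mu_I\kappa_I}$, and collecting the Gamma factors reproduces \eqref{eqn:evmkappamuintiid}. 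The hypothesis $\mu>0.5$ is inherited from Corollary~\ref{cor1} and is exactly what keeps $\Gamma(\mu-0.5)$ finite.

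I expect the obstacles to be bookkeeping rather than conceptual: one must track that each exponential produced by Kummer's transformation is matched with the correct power of $\theta_2/\theta_1$ (a sign slip there changes the formula), and one should justify interchanging the limits $m,m_I\to\infty$ with the expectation that produced \eqref{eqn:evmkappamushadowedintiid}, which follows from a routine dominated-convergence argument once $\mu>0.5$. As an alternative that bypasses the shadowed expression entirely, one can start from \eqref{eqn:evm}, use that $g_d$ and $g_I$ are independent and that a sum of $L$ i.i.d.\ $\kappa$-$\mu$ powers with parameters $(\mu_I,\kappa_I)$ is again $\kappa$-$\mu$ with parameters $(L\mu_I,\kappa_I)$, and evaluate the two fractional moments $\E[\sqrt{g_I}]$ and $\E[g_d^{-1/2}]$ via the standard integral $\int_0^\infty x^{s-1}e^{-px}I_\nu(c\sqrt{x})\,\d x$; this yields the same ${}_1F_1$ forms after a Kummer transformation.
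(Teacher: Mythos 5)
Your proposal is correct and follows essentially the same route as the paper's proof: both obtain \eqref{eqn:evmkappamuintiid} by letting $m,m_I\to\infty$ in the i.i.d.\ shadowed expression \eqref{eqn:evmkappamushadowedintiid} and then applying the Kummer transformation ${}_1F_1(a;b;z)=e^{z}\,{}_1F_1(b-a;b;-z)$. The only cosmetic difference is that you invoke the standard confluence ${}_2F_1\to{}_1F_1$ and the limit $(\theta_2/\theta_1)^{m}\to e^{\mu\kappa}$ as two separate elementary limits, whereas the paper establishes the same combined limit by a term-by-term series computation.
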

 
 \begin{proof}
 Refer Appendix \ref{app:kappamu}.
\end{proof}
 
 The above EVM expression is in terms of only the Kummer confluent hypergeometric function which can also be easily computed. Next we derive the EVM of an interference limited system when both the desired signal and interferers experience Rician fading.
 
 \begin{cor}
When the desired signal experiences Rician fading of parameter $K$ and is independent of interferers that experience i. i. d Rician fading of parameters $K_I$, 
\ifCLASSOPTIONtwocolumn
\begin{equation*}
 EVM=\frac{  \sqrt{\pi}  \Gamma(L  +0.5 ) }{\Gamma(L ) }  {}_1F_1(0.5, 1, -K) \times
\end{equation*}
\begin{equation}
{}_1F_1(-0.5, L , -L K_I) \sqrt{\frac{1+K} {1+K_I}} .
\label{eqn:evmricianint}
\end{equation}
\else
 \begin{equation}
 EVM=\frac{  \sqrt{\pi}  \Gamma(L  +0.5 ) }{\Gamma(L ) }  {}_1F_1(0.5, 1, -K) {}_1F_1(-0.5, L , -L K_I) \sqrt{\frac{1+K} {1+K_I}} .
\label{eqn:evmricianint}
\end{equation}
\fi

 \end{cor}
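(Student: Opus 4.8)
The plan is to obtain this corollary as an immediate specialization of Corollary~\ref{cor:kappamu}, using the fact that Rician fading with parameter $K$ is precisely the $\kappa$-$\mu$ distribution with $\mu=1$ and $\kappa=K$ (equivalently, the $m\to\infty$, $\mu=1$ case of the $\kappa$-$\mu$ shadowed pdf in \eqref{eqn:kappamupdf}). So no fresh integration against $f_{g_d}$ and $f_{g_I}$ is needed; everything reduces to parameter substitution in an expression already proved.

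First I would set, in Corollary~\ref{cor:kappamu}, $\mu=1$, $\kappa=K$ for the desired link and $\mu_I=1$, $\kappa_I=K_I$ for each interferer. I would note that the closure of the $\kappa$-$\mu$ family under independent summation — the same fact that underlies the proof of Corollary~\ref{cor:kappamu} — gives that $g_I=\sum_{l=1}^{L}|h_l|^2$ is again $\kappa$-$\mu$ with parameters $(L\mu_I,\kappa_I)=(L,K_I)$, so the substitution is consistent and no new distributional identity is required.

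Then it is routine bookkeeping: $\Gamma(\mu-0.5)=\Gamma(0.5)=\sqrt{\pi}$, $\Gamma(\mu)=\Gamma(1)=1$, $\Gamma(L\mu_I+0.5)=\Gamma(L+0.5)$, $\Gamma(L\mu_I)=\Gamma(L)$, ${}_1F_1(0.5,\mu,-\kappa\mu)={}_1F_1(0.5,1,-K)$, ${}_1F_1(-0.5,L\mu_I,-L\kappa_I\mu_I)={}_1F_1(-0.5,L,-LK_I)$, and the scale prefactor $\sqrt{\mu(1+\kappa)/(\mu_I(1+\kappa_I))}$ collapses to $\sqrt{(1+K)/(1+K_I)}$. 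Assembling these gives \eqref{eqn:evmricianint} verbatim, and the side condition $\mu>0.5$ of Corollary~\ref{cor:kappamu} holds automatically since here $\mu=1$.

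The only step that needs a little care — and thus the main (mild) obstacle — is the power-normalization bookkeeping: Corollary~\ref{cor:kappamu}, like Theorem~\ref{thm1}, is stated for unit-mean-power channels, so I would verify that the Rician links here are likewise normalized to unit mean power (consistent with the system model of Section~\ref{intro}'s successor), ensuring that the $\theta_{1\cdot}$-type scale factors appearing in Theorem~\ref{thm1} have already been absorbed in Corollary~\ref{cor:kappamu} and do not reappear. Once that is checked, the result follows with no hypergeometric manipulation at all.
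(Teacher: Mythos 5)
Your proposal is correct and follows exactly the paper's own route: the paper likewise obtains \eqref{eqn:evmricianint} by substituting $\mu=1$, $\kappa=K$, $\mu_I=1$, $\kappa_I=K_I$ into Corollary~\ref{cor:kappamu} (equation \eqref{eqn:evmkappamuintiid}), and your Gamma-function and ${}_1F_1$ bookkeeping reproduces the stated expression verbatim. The extra remarks on unit-mean-power normalization and closure of the $\kappa$-$\mu$ family under summation are sound but already absorbed into Corollary~\ref{cor:kappamu}, so nothing beyond the parameter substitution is needed.
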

 
 \begin{proof}
Rician fading is a special case of $\kappa$-$\mu$ fading. From Table I,  $\mu=1$, $\kappa=K$, $\mu_I=1$, $\kappa_I=K_I$ . Substituting these in (\ref{eqn:evmkappamuintiid}), EVM is obtained.
 \end{proof}
 
 \ifCLASSOPTIONtwocolumn
 \begin{figure}
\centering
\includegraphics[height=3.5in,width=3.5in]{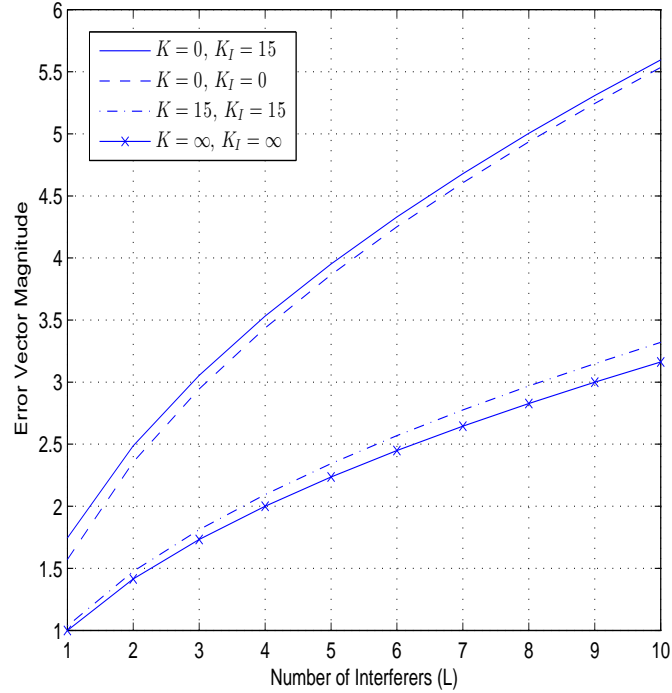}
\caption{EVM of an interference limited system when the interferers and desired signal experience Rician fading}
\label{fig:nov26}
\end{figure}

\else
 \begin{figure}
\centering
\includegraphics[height=4in,width=4in]{nov26.eps}
\caption{EVM of an interference limited system when the interferers and desired signal experience Rician fading}
\label{fig:nov26}
\end{figure}

\fi

 We will now study the impact of various parameters such as $L$, $K$, $K_I$ on the EVM given in (\ref{eqn:evmricianint}). Firstly, we analyze the impact of $L$ on EVM, when the interferers and desired signal do not experience fading, i.e., $K \rightarrow \infty$. As $K \rightarrow \infty,$
 \begin{equation}
 {}_1F_1(0.5, 1, -K) \sqrt{1+K} \rightarrow \frac{1}{\sqrt{\pi}},
\label{eqn:Klimits}
 \end{equation}
 and as   $K_I \rightarrow \infty$,
 \begin{equation}
 {}_1F_1(-0.5, L , -L K_I) \sqrt{\frac{1} {1+K_I}}  \rightarrow \frac{\sqrt{L} \Gamma(L)}{\Gamma(L+0.5)}.
  \label{eqn:KIlimits}
 \end{equation}
Substituting (\ref{eqn:Klimits}) and (\ref{eqn:KIlimits}) in (\ref{eqn:evmricianint}), as $K \rightarrow \infty$ and $K_I \rightarrow \infty$ \ie, as interferers and desired signals tend to experience no fading, EVM$=\sqrt{L}$, i.e.,  EVM approaches square root of number of interferers. Now, we analyze the impact of $L$ on EVM, when the interferers and desired signal experience Rayleigh fading, i.e., $K=0 $ and $K_I=0$. 
 Using the fact that,
 \begin{equation*}
 {}_1F_1(a, b , 0)=1
 \end{equation*}
 one can show that when both $K=0 $ and $K_I=0$, i.e., when both the desired signal and interferers  experience  Rayleigh fading, then EVM given in \eqref{eqn:evmricianint} reduces to 
 \begin{equation}
 EVM= \frac{  \sqrt{\pi}   \Gamma(L  +0.5 ) }{\Gamma(L ) }\label{evm_ray}
 \end{equation}
Now, using the following identity
 \begin{equation}
\frac{\Gamma(n+a)}{\Gamma(n+b)}=\frac{(1+\frac{(a-b) (a+b-1)}{2 n} +O(\frac{1}{n^2}))}{n^{b-a}},  \text{for large n},
\label{eqn:identity}
\end{equation}
 the EVM given in \eqref{evm_ray} can be rewritten as 
\begin{equation*}
EVM=\sqrt{\pi}\sqrt{L} \left(1-\frac{1}{8L}+O(L{^{-2}}) \right)  \text{ for large L},
\end{equation*}
So when both  interferers and desired signal experience Rayleigh fading, EVM is approximately proportional to the square root of number of interfering channels for large number of interfering channels. We have also plotted the  EVM with respect to the number of interferers for different combinations of $K$ and $K_I$ in Fig. \ref{fig:nov26} using the expression given in \eqref{eqn:evmricianint}. It can be observed that as the Rician parameter $K$ increases, i.e., as the strength of line of sight component increases for the desired signal, EVM decreases and it is expected. Also as $K_I$ increases, \ie, as interferers become stronger, EVM increases as expected. When $K=K_I=15$, EVM approaches square root of number of interferers \ie, it approaches the EVM attained when $K=K_I \rightarrow \infty$. 

 \begin{cor}
When the desired signal experiences Nakagami fading of parameter $m$ and is independent of interferers that experience i. i. d Nakagami fading of parameters $m_I$,
 \begin{equation}
EVM= \frac{  \Gamma(m-0.5)  \Gamma(L m_I  +0.5 ) }{\Gamma(m) \Gamma(L  m_I )  } \sqrt{ \frac{m  } {m_I }},
\label{eqn:evmnakagamiintiid}
\end{equation}
where $m > 0.5$.
 \end{cor}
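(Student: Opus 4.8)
The plan is to obtain this as a direct specialization of Corollary \ref{cor:kappamu}, since Nakagami-$m$ fading is the special case of $\kappa$-$\mu$ fading with $\mu=m$, $\kappa\to0$ (and likewise $\mu_I=m_I$, $\kappa_I\to0$ for the i.i.d. interferers). Starting from the closed form \eqref{eqn:evmkappamuintiid}, I would substitute these parameter choices and evaluate the resulting limits factor by factor.

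The only quantities in \eqref{eqn:evmkappamuintiid} that depend on $\kappa$ or $\kappa_I$ are the two confluent hypergeometric functions and the square-root prefactor. Using ${}_1F_1(a,b,0)=1$, as $\kappa\to0$ we have ${}_1F_1(0.5,m,-\kappa m)\to1$, and as $\kappa_I\to0$ we have ${}_1F_1(-0.5,Lm_I,-L\kappa_I m_I)\to1$; simultaneously $\sqrt{\mu(1+\kappa)/(\mu_I(1+\kappa_I))}\to\sqrt{m/m_I}$. The Gamma-function factor becomes $\Gamma(m-0.5)\Gamma(Lm_I+0.5)/(\Gamma(Lm_I)\Gamma(m))$, and collecting the pieces yields \eqref{eqn:evmnakagamiintiid}. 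The restriction $\mu>0.5$ inherited from Corollary \ref{cor:kappamu} translates directly into $m>0.5$, which is exactly the condition that keeps $\Gamma(m-0.5)$ finite.

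As an independent check I would also verify \eqref{eqn:evmnakagamiintiid} straight from \eqref{eqn:evm} with $\sigma^2=0$: a Nakagami-$m$ power of unit mean is $\mathrm{Gamma}(m,1/m)$ distributed, so $g_d\sim\mathrm{Gamma}(m,1/m)$, and since a sum of $L$ i.i.d. $\mathrm{Gamma}(m_I,1/m_I)$ variables is $\mathrm{Gamma}(Lm_I,1/m_I)$, we have $g_I\sim\mathrm{Gamma}(Lm_I,1/m_I)$. By independence of $g_d$ and $g_I$, \eqref{eqn:evm} factorizes as $\E[g_d^{-1/2}]\,\E[g_I^{1/2}]$, and each factor is a standard Gamma moment: $\E[g_d^{-1/2}]=\sqrt{m}\,\Gamma(m-0.5)/\Gamma(m)$ and $\E[g_I^{1/2}]=(1/\sqrt{m_I})\,\Gamma(Lm_I+0.5)/\Gamma(Lm_I)$, whose product is again \eqref{eqn:evmnakagamiintiid}.

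There is no genuinely hard step here; the only points requiring care are that the limits $\kappa\to0$, $\kappa_I\to0$ should be taken in the already-evaluated closed form of Corollary \ref{cor:kappamu} (rather than re-doing the integral from the pdf), and that the finiteness of the inverse half-moment $\E[g_d^{-1/2}]$ — equivalently the appearance of $\Gamma(m-0.5)$ — is precisely what forces the stated hypothesis $m>0.5$.
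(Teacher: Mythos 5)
Your proposal is correct and follows the paper's own route exactly: substituting $\mu=m$, $\kappa\to0$, $\mu_I=m_I$, $\kappa_I\to0$ into \eqref{eqn:evmkappamuintiid} and invoking ${}_1F_1(a,b,0)=1$. The added sanity check via the fractional moments of the Gamma distributions $g_d\sim\mathrm{Gamma}(m,1/m)$ and $g_I\sim\mathrm{Gamma}(Lm_I,1/m_I)$ is a valid independent confirmation, though not part of the paper's argument.
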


 \begin{proof}
Note that Nakagami-m fading is a special case of $\kappa$-$\mu$ fading.  Therefore, substituting  $\mu=m$, $\kappa \rightarrow 0$, $\mu_I=m_I$, $\kappa_I \rightarrow 0$  in (\ref{eqn:evmkappamuintiid}), and using the identity ${}_1F_1(.,.,0)=1$,  EVM expression is obtained.
 \end{proof}
 
 \ifCLASSOPTIONtwocolumn
 \begin{figure}
\includegraphics[height=3.5in, width=4in]{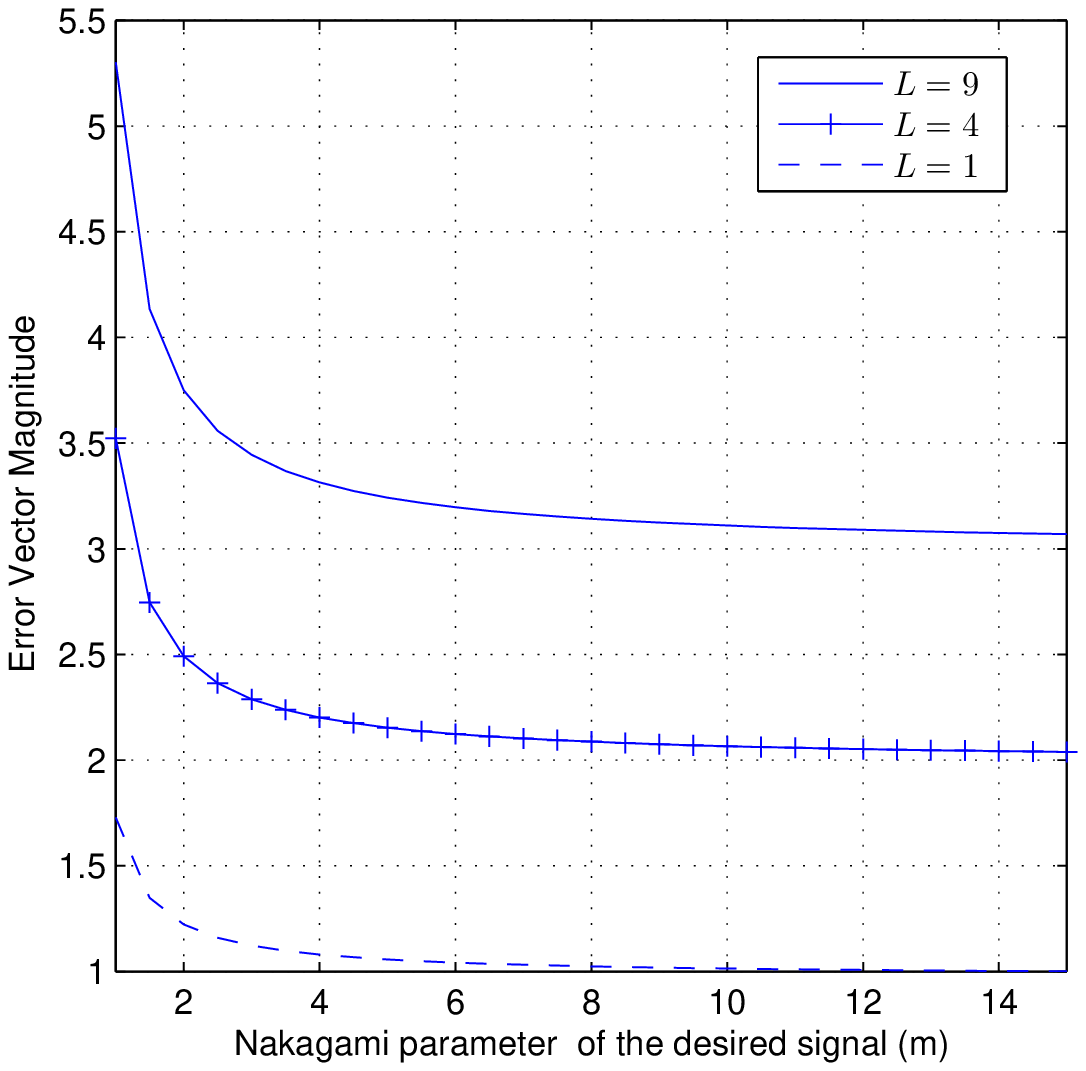}
\caption{EVM of an interference limited system when the interferers and desired signal experience Nakgamai fading. Here $m_I$=5 and $L$ denotes the number of interferers}
\label{fig:nakagamiparameter}
\end{figure}
 \else
\begin{figure}
\centering
\includegraphics[scale=1]{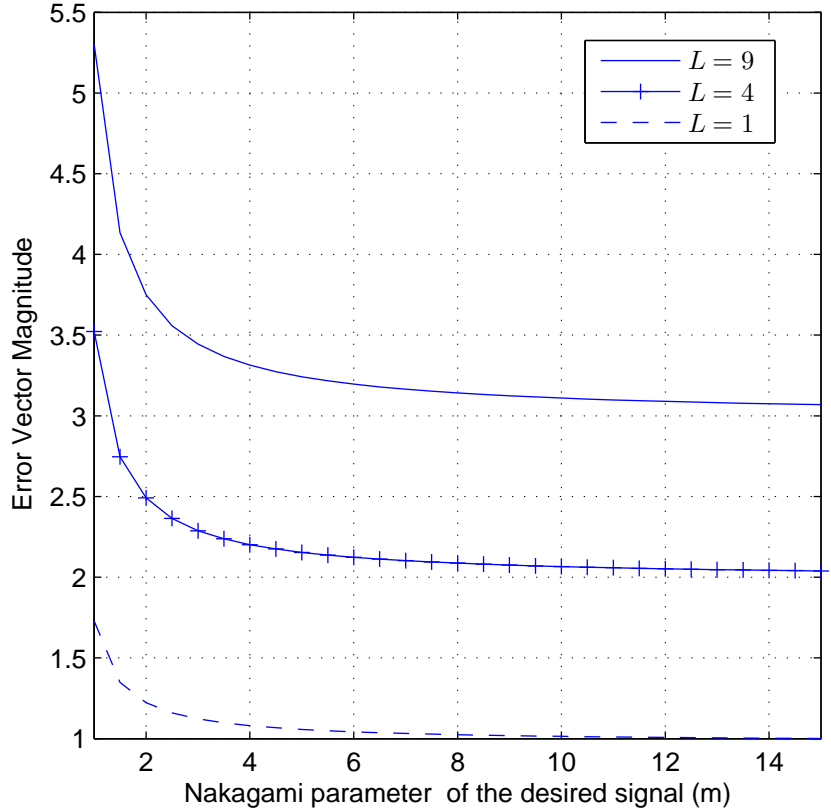}
\caption{EVM of an interference limited system when the interferers and desired signal experience Nakgamai fading. Here $m_I$=5 and $L$ denotes the number of interferers}
\label{fig:nakagamiparameter}
\end{figure}
\fi

 We will now study the impact of various parameters such as $L$, $m$, $m_I$ on the EVM given in \eqref{eqn:evmnakagamiintiid}. Firstly, we analyse the impact of $L$ on EVM, when the interferers and desired signal experience Nakagami-m fading with large value of $m$. Using the identity given in \eqref{eqn:identity}, one obtains,
 \begin{equation}
\frac{\sqrt{m} \Gamma(m-0.5)}{\Gamma(m)} = 1+\frac{3}{8m} +O(m^{-2}) \text{ for large $m$}
\label{eqn:largem}
 \end{equation}
 and for large $m_I$
 \begin{equation}
  \frac{ \Gamma(L m_I+0.5)}{\Gamma(L m_I) \sqrt{m_I}} = \sqrt{L} \left(1-\frac{1}{8L m_I} + O((L m_I)^{-2}) \right).
  \label{eqn:largemI}
 \end{equation}
Substituting \eqref{eqn:largem}, \eqref{eqn:largemI} in \eqref{eqn:evmnakagamiintiid} we observe that as $m \rightarrow \infty$ and $m_I \rightarrow \infty$, i.e., when signal and interferers experience no fading,  EVM approaches square root of number of interferers. This is similar to the one derived for Rician fading with parameter $K\rightarrow \infty$. When both the desired signal and interferers experience  Rayleigh fading, i.e., $m=m_I=1$, then EVM given in \eqref{eqn:evmnakagamiintiid} reduces to 
 \begin{equation}
 EVM= \frac{  \sqrt{\pi}   \Gamma(L  +0.5 ) }{\Gamma(L ) }
 \label{eqn:evm_ray2}
 \end{equation}
This is same as the one derived in (\ref{evm_ray}).
We have also plotted the  EVM with respect to the Nakagami parameter of the desired channel $m$ for different number of interferers as shown in Fig. \ref{fig:nakagamiparameter} using the expression given in \eqref{eqn:evmnakagamiintiid}. It can be observed that as the Nakagami parameter $m$ increases, i.e., as the number of clusters increase for the desired signal, EVM decreases as expected.  It can be observed that for higher value of $m$ and $m_I$, EVM converges to the square root of number of interfering channels as described before.

\subsection{Interference and Noise}
In this subsection, we derive the EVM in the presence of interference and AWGN of zero mean, variance $\sigma^2$, when the desired signal experiences $\kappa$-$\mu$ shadowed fading and interferers experience i.i.d. Nakagami-m fading. Note that  the EVM is given by 
\begin{equation*}
EVM=\int\limits_{0}^{\infty} \int\limits_{0}^{\infty} \sqrt{\frac{g_I+\sigma^2}{g_d}} f_{g_I}(g_I) f_{g_d}(g_d) \d g_I \d g_d.
\end{equation*}
Here, $g_I$ is the sum of interfering fading power RVs and hence it is Gamma distributed as it is assumed that the interferers experience  i.i.d. Nakagami-m fading of unit mean power, i.e., $g_I\sim G(Lm_I,\frac{1}{m_I})$ . The desired signal is affected by $\kappa$-$\mu$ shadowed fading power and the parameters of signal power $g_d$ are $(\kappa, \mu, m)$.  The following theorem provides the general expression of EVM in the presence of interference and noise.
\begin{theorem}
\label{thm2}
When the interferers experience i. i. d. Nakagami fading and the desired signal experiences $\kappa$-$\mu$ shadowed fading , EVM is 
\ifCLASSOPTIONtwocolumn
\begin{equation*}
(\sigma^2)^{m_I L+.5}  U \left(m_I L, \frac{3}{2} + m_I L , \sigma^2 m_I  \right) \sqrt{\mu(1+\kappa)} m_I^{m_I L} \times
\end{equation*}
\begin{equation}
 \left(\frac{m}{\mu\kappa+m} \right)^m \frac{\Gamma(\mu-0.5) }{\Gamma(\mu)} {}_2F_1 (\mu -\frac{1}{2},m,\mu,\frac{\mu \kappa}{m+ \mu \kappa} )   \label{eqn:evm_noise}
\end{equation}
\else
\begin{equation}
\frac{ U \left(m_I L, \frac{3}{2} + m_I L , \sigma^2 m_I  \right) \sqrt{\sigma^2 \mu(1+\kappa)} \left(\frac{m}{\mu\kappa+m} \right)^m \Gamma(\mu-0.5) {}_2F_1 (\mu -\frac{1}{2},m,\mu,\frac{\mu \kappa}{m+ \mu \kappa} )  }{(\sigma^2 m_I)^{-m_I L}  \Gamma(\mu) },
\label{eqn:evm_noise}
\end{equation}
\fi
where $\mu > 0.5$, ${}_2 F_1(.)$ is the Gauss hypergeometric function  and $U(.)$ is the Tricomi confluent hypergeometric function.
\label{thm:noise}
\end{theorem}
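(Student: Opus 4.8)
The plan is to exploit that the integrand of the EVM integral in the statement factors as a product. Since $\sqrt{(g_I+\sigma^2)/g_d}=\sqrt{g_I+\sigma^2}\cdot g_d^{-1/2}$ and $g_d$, $g_I$ are independent, the double integral separates,
\begin{equation*}
\evm=\left(\int_0^\infty g_d^{-1/2}f_{g_d}(g_d)\,\d g_d\right)\left(\int_0^\infty \sqrt{g_I+\sigma^2}\,f_{g_I}(g_I)\,\d g_I\right),
\end{equation*}
so the two marginals can be handled one at a time and the two closed forms multiplied at the end.

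For the desired-signal factor I would substitute the $\kappa$-$\mu$ shadowed pdf \eqref{eqn:gdpdf} and invoke the standard table integral $\int_0^\infty e^{-\lambda t}t^{s-1}\,{}_1F_1(a;c;kt)\,\d t=\Gamma(s)\lambda^{-s}\,{}_2F_1(a,s;c;k/\lambda)$, applied with $s=\mu-\tfrac12$, $\lambda=1/\theta_1$, $k=(\theta_2-\theta_1)/(\theta_1\theta_2)$, $a=m$, $c=\mu$. This produces a ${}_2F_1$ with argument $1-\theta_1/\theta_2$, which collapses to $\tfrac{\mu\kappa}{\mu\kappa+m}$ once $\theta_1=\tfrac{1}{\mu(1+\kappa)}$ and $\theta_2=\tfrac{\mu\kappa+m}{\mu(1+\kappa)m}$ are inserted. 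Consolidating the prefactor $\theta_1^{m-\mu}/(\theta_2^m\Gamma(\mu))$ with $\theta_1^{\mu-1/2}\Gamma(\mu-\tfrac12)$ and using $\theta_1^{-1/2}=\sqrt{\mu(1+\kappa)}$ and $(\theta_1/\theta_2)^m=(m/(\mu\kappa+m))^m$ yields the factor $\sqrt{\mu(1+\kappa)}\left(\tfrac{m}{\mu\kappa+m}\right)^m\tfrac{\Gamma(\mu-1/2)}{\Gamma(\mu)}\,{}_2F_1(\mu-\tfrac12,m;\mu;\tfrac{\mu\kappa}{\mu\kappa+m})$.

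For the interference factor, observe first that since the $L$ interferers undergo i.i.d. Nakagami-$m$ fading of unit mean power, each $|h_l|^2\sim G(m_I,1/m_I)$, hence $g_I\sim G(Lm_I,1/m_I)$ with $f_{g_I}(x)=\tfrac{m_I^{Lm_I}}{\Gamma(Lm_I)}x^{Lm_I-1}e^{-m_I x}$. I would then rescale $g_I=\sigma^2 t$, so $\sqrt{g_I+\sigma^2}=\sqrt{\sigma^2}\sqrt{1+t}$, and recognize $\int_0^\infty t^{Lm_I-1}(1+t)^{1/2}e^{-m_I\sigma^2 t}\,\d t=\Gamma(Lm_I)\,U\!\left(Lm_I,\,Lm_I+\tfrac32,\,m_I\sigma^2\right)$ from the integral representation $U(a,b,z)=\tfrac{1}{\Gamma(a)}\int_0^\infty e^{-zt}t^{a-1}(1+t)^{b-a-1}\,\d t$ (here $a=Lm_I$, $b-a-1=\tfrac12$, $z=m_I\sigma^2>0$). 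The interference factor thus equals $(\sigma^2 m_I)^{Lm_I}\sqrt{\sigma^2}\,U(m_IL,\tfrac32+m_IL,\sigma^2 m_I)$.

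Multiplying the two factors gives exactly \eqref{eqn:evm_noise}. The only genuinely delicate points are verifying the hypotheses of the two integral identities: the restriction $\mu>\tfrac12$ is precisely what makes the $g_d$-integral converge at the origin, while $\theta_2>\theta_1$ (equivalently $\tfrac{\mu\kappa+m}{m}>1$) keeps the ${}_2F_1$ argument inside $(0,1)$ so that the series interchange behind the first identity is legitimate. I expect the main obstacle to be purely bookkeeping — correctly tracking the powers of $\theta_1,\theta_2,\sigma^2,m_I$ when the two factors are combined — rather than any conceptual difficulty.
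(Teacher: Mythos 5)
Your proposal is correct and follows essentially the same route as the paper: the paper likewise splits the double integral into the product of $\int_0^\infty g_d^{-1/2}f_{g_d}\,\d g_d$ (evaluated via the Laplace-transform identity ${}_2F_1(a,b,c,x)=\frac{1}{\Gamma(a)}\int_0^\infty e^{-t}t^{a-1}{}_1F_1(b;c;xt)\,\d t$ after the rescaling $t=\mu(1+\kappa)g_d$, which is your table integral in disguise) and $\int_0^\infty\sqrt{g_I+\sigma^2}f_{g_I}\,\d g_I$ (evaluated, exactly as you do, by the substitution $g_I=\sigma^2 t$ and the integral representation of the Tricomi function $U$). Your bookkeeping of the prefactors and the convergence remarks ($\mu>\tfrac12$, argument of ${}_2F_1$ in $(0,1)$) all check out against \eqref{eqn:evm_noise}.
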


\begin{proof}
Refer Appendix \ref{app:theorem2}.
\end{proof}

 The above EVM expression is in terms of Tricomi confluent hypergeometric function and Gauss hypergeometric function, and both of them can be easily computed. Next we derive the EVM expression  when the desired signal experiences $\kappa$-$\mu$ fading.
\begin{cor}
When the desired signal experiences $\kappa$-$\mu$  fading of parameters $(\kappa, \mu)$, the EVM is given by
\ifCLASSOPTIONtwocolumn
\begin{equation*}
\left(\sigma^2\right)^{m_I L+.5}  U \left(m_I L, \frac{3}{2} + m_I L , \sigma^2 m_I  \right) \sqrt{\mu(1+\kappa) }  \times
\end{equation*}
\begin{equation}
\frac{\Gamma(\mu-0.5) }{\Gamma(\mu)}{}_1F_1 \left(\frac{1}{2},\mu,-\kappa \mu \right)  m_I^{m_I L},
\label{eqn:evmkappamu}
\end{equation}
\else
\begin{equation}
\left(\sigma^2\right)^{m_I L+.5}  U \left(m_I L, \frac{3}{2} + m_I L , \sigma^2 m_I  \right) \sqrt{\mu(1+\kappa) }  \frac{\Gamma(\mu-0.5) }{\Gamma(\mu)}{}_1F_1 \left(\frac{1}{2},\mu,-\kappa \mu \right)  m_I^{m_I L},
\label{eqn:evmkappamu}
\end{equation}
\fi
where $\mu > 0.5$.
\end{cor}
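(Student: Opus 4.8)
The plan is to specialize Theorem~\ref{thm:noise} to the $\kappa$-$\mu$ case. Recall that $\kappa$-$\mu$ fading is obtained from $\kappa$-$\mu$ shadowed fading in the limit $m \to \infty$ (with $\kappa$ and $\mu$ held fixed), as noted after \eqref{eqn:kappamupdf}. So the natural route is to take the limit $m \to \infty$ in the EVM expression \eqref{eqn:evm_noise}. The factors involving only $\sigma^2$, $m_I$, $L$ and $\mu$ are untouched by this limit, so the entire task reduces to showing
\[
\lim_{m\to\infty}\left(\frac{m}{\mu\kappa+m}\right)^m {}_2F_1\!\left(\mu-\tfrac12,m,\mu,\frac{\mu\kappa}{m+\mu\kappa}\right) = {}_1F_1\!\left(\tfrac12,\mu,-\kappa\mu\right).
\]
Everything else in \eqref{eqn:evmkappamu} is just \eqref{eqn:evm_noise} with that product replaced by this ${}_1F_1$, so once this limit is established the corollary follows by inspection.

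To carry out the limit I would first rewrite $\left(\frac{m}{\mu\kappa+m}\right)^m = \left(1+\frac{\mu\kappa}{m}\right)^{-m} \to e^{-\mu\kappa}$. Then I would handle the ${}_2F_1$ via its series $\sum_{k\ge0}\frac{(\mu-\frac12)_k (m)_k}{(\mu)_k\, k!}\left(\frac{\mu\kappa}{m+\mu\kappa}\right)^k$: as $m\to\infty$ the term $(m)_k\left(\frac{\mu\kappa}{m+\mu\kappa}\right)^k = \frac{(m)_k}{(m+\mu\kappa)^k}(\mu\kappa)^k \to (\mu\kappa)^k$ termwise, giving $\sum_{k\ge0}\frac{(\mu-\frac12)_k}{(\mu)_k\,k!}(\mu\kappa)^k = {}_1F_1(\mu-\frac12,\mu,\mu\kappa)$. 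Hence the product tends to $e^{-\mu\kappa}\,{}_1F_1(\mu-\frac12,\mu,\mu\kappa)$, and Kummer's transformation ${}_1F_1(a,b,z)=e^{z}\,{}_1F_1(b-a,b,-z)$ with $a=\mu-\frac12$, $b=\mu$, $z=\mu\kappa$ turns this into ${}_1F_1(\frac12,\mu,-\mu\kappa)$, exactly the claimed factor. (Equivalently one can quote the known confluence ${}_2F_1(a,b,c,z/b)\to {}_1F_1(a,c,z)$ as $b\to\infty$ and then apply Kummer's transformation; either way the computation is short.)

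The one point deserving slight care — and the only real obstacle — is justifying the interchange of the limit $m\to\infty$ with the infinite ${}_2F_1$ summation. For $0<\kappa$ fixed and $m$ large, $\frac{\mu\kappa}{m+\mu\kappa}<1$ and each summand is dominated in absolute value by $\frac{(\mu-\frac12)_k}{(\mu)_k\,k!}(\mu\kappa)^k \cdot C$ for a uniform constant, whose sum converges, so dominated convergence (for series) applies and the termwise limit is legitimate. Alternatively, this is a standard confluence relation for hypergeometric functions and can simply be cited. With the interchange justified, substituting $m\to\infty$ into \eqref{eqn:evm_noise} — which also uses $m>0.5$ automatically, and needs only $\mu>0.5$ to keep $\Gamma(\mu-0.5)$ finite — yields \eqref{eqn:evmkappamu}, completing the proof.
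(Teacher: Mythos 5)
Your proposal is correct and follows essentially the same route as the paper: the paper also specializes \eqref{eqn:evm_noise} by letting $m\to\infty$, invoking the limit \eqref{eqn:limit1} (itself established by exactly your termwise series argument) and then applying the Kummer transformation $e^{-z}\,{}_1F_1(b-a,b,z)={}_1F_1(a,b,-z)$. Your added remark justifying the interchange of limit and summation is a detail the paper omits but does not change the argument.
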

\begin{proof}
$\kappa$-$\mu$  fading is a special case of $\kappa$-$\mu$ shadowed fading when $m \rightarrow \infty$ \cite{6594884}. Combining (\ref{eqn:limit1}) and (\ref{eqn:evm_noise}) and using the transformation $e^{-z} {}_1F_1(b-a,b,z)={}_1F_1(a,b,-z)$, EVM is obtained.
\end{proof}
Again, the above EVM expression is in terms of Tricomi confluent hypergeometric function and Kummer confluent hypergeometric function which can be easily computed. 
Rician fading is a special case of $\kappa$-$\mu$ fading for $\mu$=1, $\kappa$=$K$. Hence the EVM when desired signal experience Rician fading is given by
\ifCLASSOPTIONtwocolumn
\begin{equation}
  \frac{  U \left(m_I L, \frac{3}{2} + m_I L , \sigma^2 m_I  \right)  {}_1F_1 \left(\frac{1}{2},1,-K \right) }{ (m_I \sigma^2)^{-L m_I} (\sigma^2 \pi(1+K))^{-0.5}  },
\label{eqn:evmriciannoise}
\end{equation}
\else
\begin{equation}
\left(\sigma^2\right)^{m_I L+.5}  U \left(m_I L, \frac{3}{2} + m_I L , \sigma^2 m_I  \right) \sqrt{\pi(1+K) } {}_1F_1 \left(\frac{1}{2},1,-K \right)  m_I^{m_I L},
\label{eqn:evmriciannoise}
\end{equation}
\fi

Now, we derive the EVM expression  when the desired signal experiences Nakagami-m fading.
\begin{cor}
When the desired signal experiences Nakagami-$m$ fading, the EVM is given by

\begin{equation}
\frac{\left(\sigma^2 m_I \right)^{m_I L}  U \left(m_I L, \frac{3}{2} + m_I L , \sigma^2 m_I  \right) \Gamma(m-0.5)}{  (m \sigma^2)^{-0.5} \Gamma(m)},
\label{eqn:evmnakagami}
\end{equation}

where $m > 0.5$.
\end{cor}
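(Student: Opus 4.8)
The plan is to read off \eqref{eqn:evmnakagami} as a one-line specialization of the $\kappa$-$\mu$ result \eqref{eqn:evmkappamu}, exactly in the spirit of the way the interference-limited Nakagami corollary was obtained from \eqref{eqn:evmkappamuintiid}. Nakagami-$m$ fading is the special case of $\kappa$-$\mu$ fading with $\mu=m$ and $\kappa\to 0$, so I would substitute these values into \eqref{eqn:evmkappamu}: the factor $\sqrt{\mu(1+\kappa)}$ becomes $\sqrt{m}$, the Kummer factor collapses via ${}_1F_1(\tfrac{1}{2},\mu,-\kappa\mu)\to{}_1F_1(\tfrac{1}{2},m,0)=1$, and $\Gamma(\mu-0.5)/\Gamma(\mu)\to\Gamma(m-0.5)/\Gamma(m)$. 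What remains is $(\sigma^2)^{m_I L+0.5}\,U(m_I L,\tfrac{3}{2}+m_I L,\sigma^2 m_I)\,\sqrt{m}\,\frac{\Gamma(m-0.5)}{\Gamma(m)}\,m_I^{m_I L}$, and using $(\sigma^2)^{m_I L+0.5}m_I^{m_I L}\sqrt{m}=(\sigma^2 m_I)^{m_I L}(m\sigma^2)^{0.5}$ puts it in the stated form. The hypothesis $m>0.5$ is inherited verbatim from the $\mu>0.5$ condition in \eqref{eqn:evmkappamu}.

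As an independent check I would also derive it straight from \eqref{eqn:evm}. Since the desired power $g_d$ and the aggregate interference power $g_I$ are independent, the double integral factors as $\E[g_d^{-1/2}]\,\E[\sqrt{g_I+\sigma^2}]$. Under unit-mean Nakagami-$m$ fading $g_d\sim G(m,\tfrac{1}{m})$, whence $\E[g_d^{-1/2}]=\sqrt{m}\,\Gamma(m-0.5)/\Gamma(m)$, a negative-order Gamma moment that is finite precisely when $m>0.5$. Because the interferers are i.i.d. Nakagami-$m_I$ we have $g_I\sim G(Lm_I,\tfrac{1}{m_I})$; after the change of variable $g_I=\sigma^2 u$ the surviving integral is $\int_0^\infty u^{Lm_I-1}(1+u)^{1/2}e^{-(\sigma^2 m_I)u}\,\d u=\Gamma(Lm_I)\,U(Lm_I,Lm_I+\tfrac{3}{2},\sigma^2 m_I)$ by the Euler-type integral representation of the Tricomi function. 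Multiplying by the two Gamma normalizing constants reproduces \eqref{eqn:evmnakagami}.

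I do not expect a genuine obstacle here: the substitution route is immediate and the direct route involves only a standard Gamma moment and one integral representation of $U$. The single point to watch is notational: the shadowing index of the $\kappa$-$\mu$ shadowed model (also written $m$ in Theorem~\ref{thm:noise}) must not be confused with the Nakagami index $m$ appearing in this Corollary, and working from \eqref{eqn:evmkappamu} sidesteps this entirely because that parameter has already been sent to infinity there.
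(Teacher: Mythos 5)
Your first route---substituting $\mu=m$, $\kappa\to 0$ into \eqref{eqn:evmkappamu} and using ${}_1F_1(a,b,0)=1$---is exactly the paper's own proof, and your algebra rearranging $(\sigma^2)^{m_IL+0.5}m_I^{m_IL}\sqrt{m}$ into the stated form is correct. The independent verification via the Gamma moment $\E[g_d^{-1/2}]$ and the Euler integral for $U$ is a sound bonus check but not needed.
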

\begin{proof}
Nakagami is a special case of $\kappa$-$\mu$ fading for parameters $\kappa \rightarrow 0$, $\mu=m$ \cite{6594884}. Substituting these in (\ref{eqn:evmkappamu}) and using the fact that ${}_1F_1(a,b,0)=1$, EVM expression is obtained. 
\end{proof}
Note that the EVM expression is very simple when the desired signal experience Nakagami-m fading. Similarly by putting $m=1$ and $m_I=1$ in \eqref{eqn:evmnakagami}, one can obtain the EVM expression when both desired signal and interferers experience Rayleigh fading and it is given by
 \begin{equation}
EVM=
\left(\sigma^2\right)^{L+.5} U \left( L, \frac{3}{2} +  L , \sigma^2 \right) \sqrt{\pi}   .
\label{eqn:evmrayleigh}
\end{equation}

\section{Numerical Results}
In this Section, we verify the EVM expressions derived for  both the interference limited and AWGN+interference case for different fading distributions through Monte Carlo simulation. We use a BPSK modulation based system over different fading channels for the simulation. 
Fig. \ref{fig:kappamunoise} plots the EVM with respect to SNR when both interference and noise are present. Here  the interferers and desired signal experience  Nakagami-m fading and $\kappa$-$\mu$ shadowed fading, respectively. Note that the theoretical results are plotted using    \eqref{eqn:evm_noise}. Firstly, it can be observed that the simulation results match with the theoretical results. Secondly, it can be seen that as SNR increases, EVM decreases and it is true for all combinations of fading parameters. Interestingly, it can be also observed that as any of the fading parameter, i.e., $\kappa$, $\mu$ or $m$ increases, EVM decreases for a given SNR. The reason for such behavior is as follows: the parameter $\kappa$ denotes the strength of the line of sight component in the desired signal. Now, when the parameter $\kappa$ increases, the strength of the line of sight component increases and hence EVM decreases. Similarly, the parameter $\mu$ denotes the number of clusters in the desired channel and hence when $\mu$ increases, EVM decreases. Note that in the $\kappa$-$\mu$ shadowed fading, all the dominant components are subjected to the common shadowing fluctuation and that is represented by the power-normalized random amplitude $\xi$ and this $\xi$ is Nakagami-m random variable with shaping parameter $m$. Now, when the shadowing parameter $m$ increases, the shadowing fluctuation of all the dominant components decreases and hence EVM decreases.   

Fig. \ref{fig:fadingnoise} plots the EVM with respect to SNR when the desired signal experience different fading, such as     Rician, Nakagami  and Rayleigh. In all the cases the interferers experience Nakagami-m fading. The theoretical results for    Rician,   Nakagami  and Rayleigh fading are plotted using \eqref{eqn:evmriciannoise}, \eqref{eqn:evmnakagami} and \eqref{eqn:evmrayleigh}, respectively.  It can be seen that the simulation results match with the theoretical results and as expected with increase in SNR, the EVM decreases. 
 
 Fig. \ref{fig:fadinginterference} plots the EVM with respect to number of interferers  for an interference limited scenario.
The theoretical results for Rayleigh, Rician and $\kappa$-$\mu$   shadowed  fading are plotted using \eqref{eqn:evm_ray2}, \eqref{eqn:evmricianint}, \eqref{eqn:evmkappamushadowedintiid}, respectively.  It can be seen that the simulation results match with the theoretical results and as expected as number of interferers increases, EVM increases.

 Fig. \ref{fig:fadingnakagamilimiting}  plots the EVM with respect to SNR.  Here both the systems, i.e.,  interference + AWGN system and interference  limited system have been considered. The EVM for both interference limited scenario and interference+AWGN is plotted using \eqref{eqn:evmnakagamiintiid} and \eqref{eqn:evmnakagami}.  It can be observed that when SNR is significantly low, i.e., noise limited case, there is significant gap between the EVM of the interference+noise system and EVM of the interference limited system. The gap decreases as the SNR increases. Moreover, the EVM of the  interference+noise system approaches the EVM of the interference limited system at very high SNR as expected.

 \ifCLASSOPTIONtwocolumn
 \begin{figure}
\centering
\includegraphics[height=3.5in,width=4in]{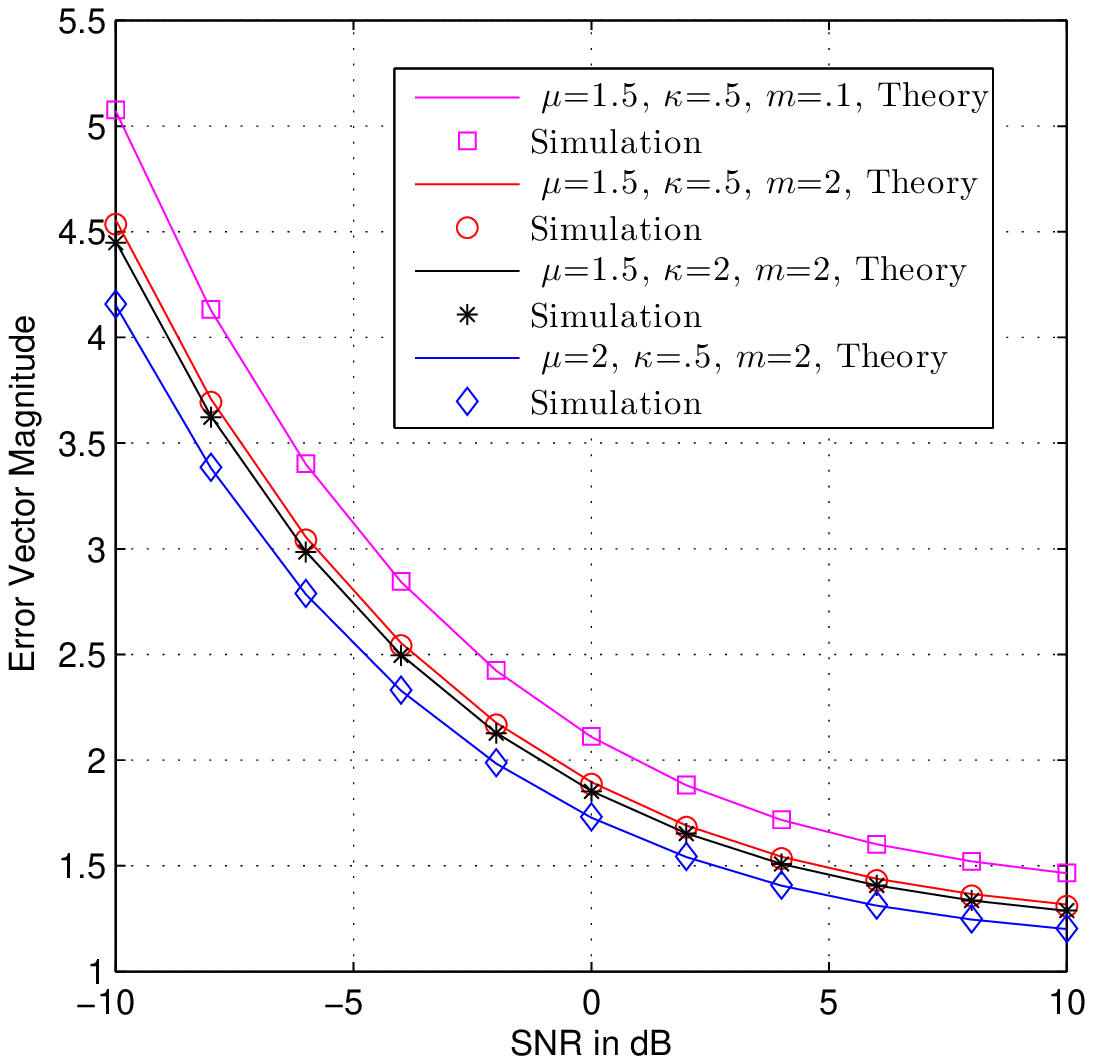}
\caption{Theoretical and simulated  EVM for different parameters of $\kappa$-$\mu$ shadowed fading when both interference and noise are present. Here $m_I$=1, $L$=1.}
\label{fig:kappamunoise}
\end{figure}

 \begin{figure}
\centering
\includegraphics[height=3.5in,width=4in]{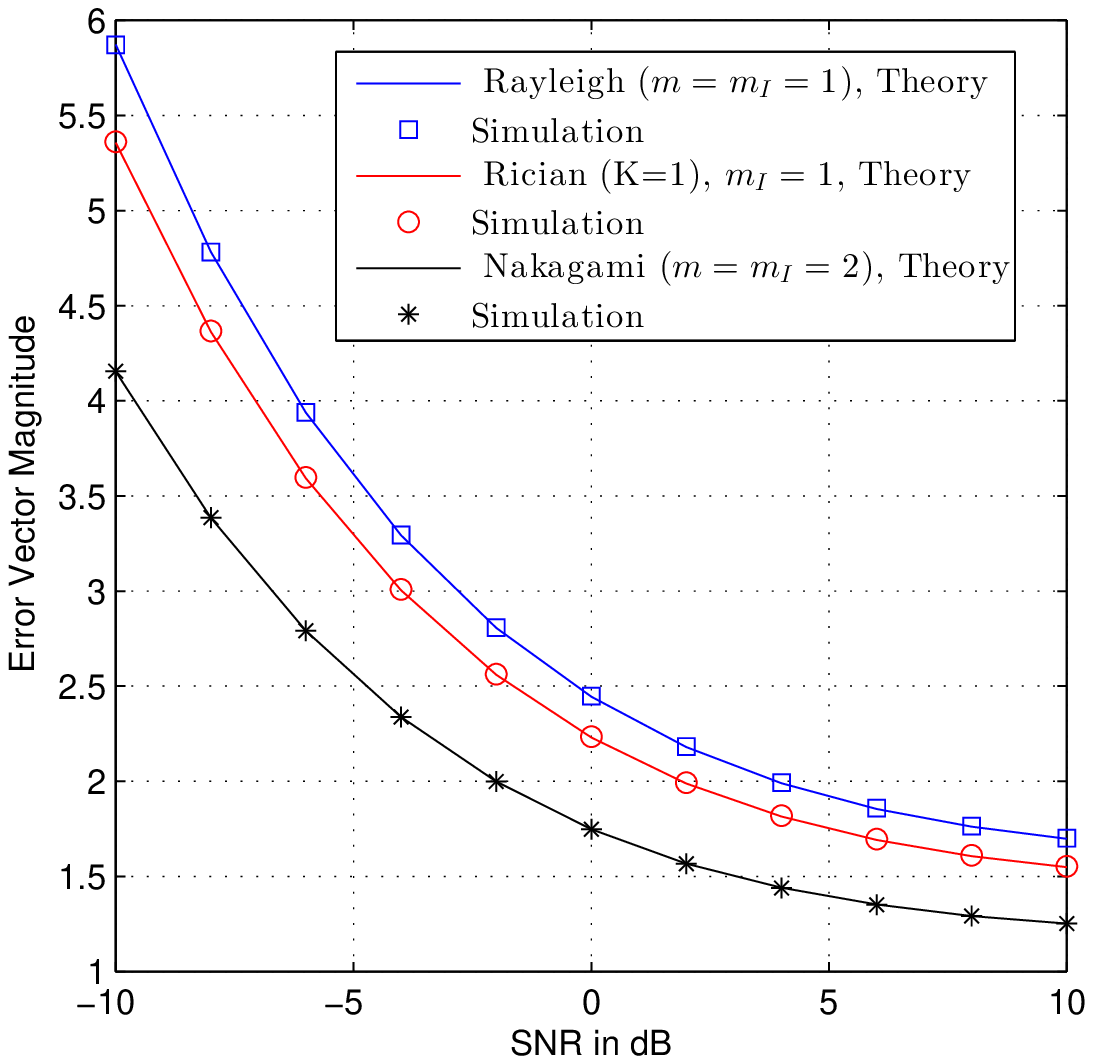}
\caption{Theoretical and simulated EVM for different fading distributions when both interference and noise are present. }
\label{fig:fadingnoise}
\end{figure}

 \begin{figure}
\centering
\includegraphics[height=3.5in,width=4in]{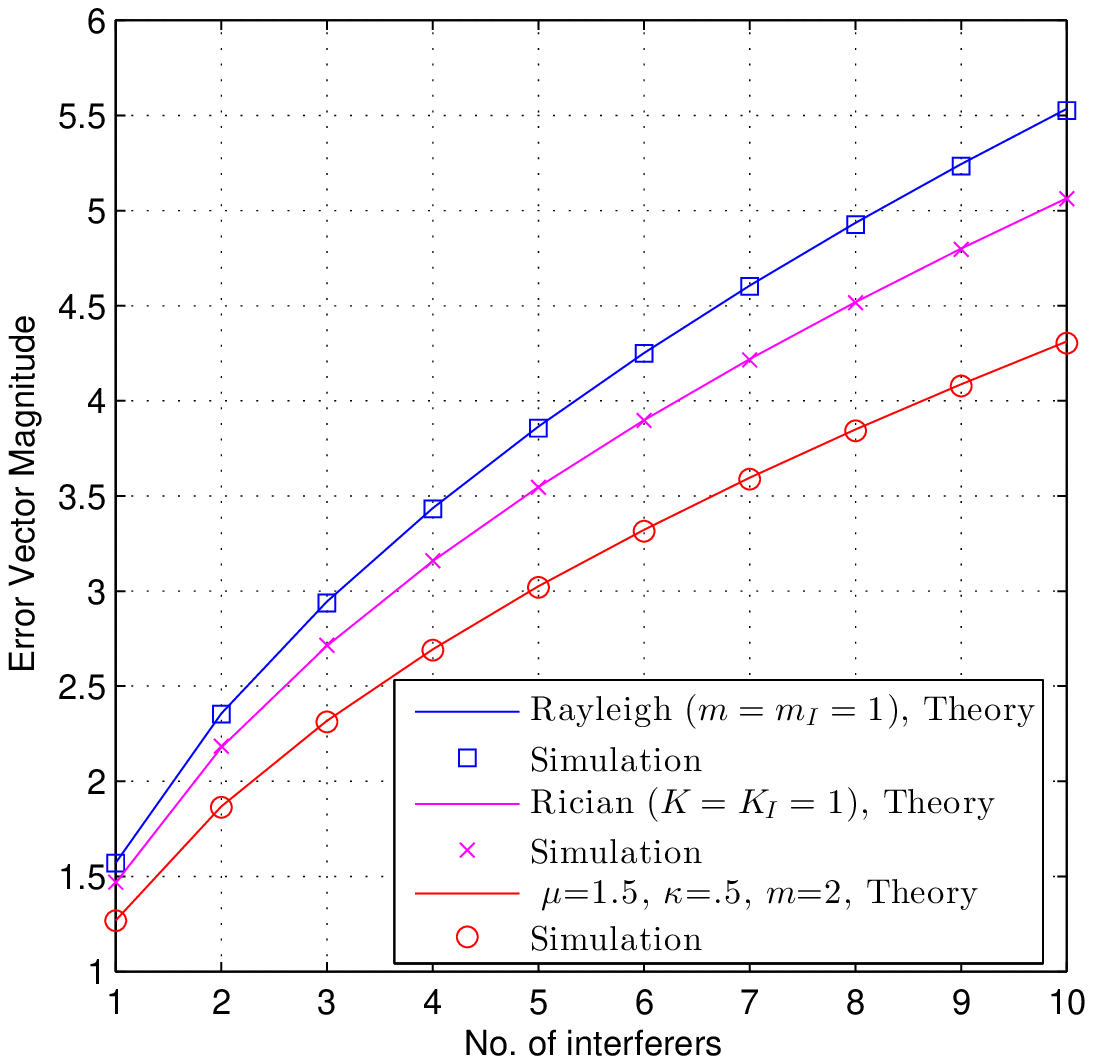}
\caption{Theoretical and simulated EVM for different fading distributions in an interference limited system  interferer signals experience i.i.d. fading.}
\label{fig:fadinginterference}
\end{figure}

\begin{figure}
\centering
\includegraphics[height=3.5in,width=4in]{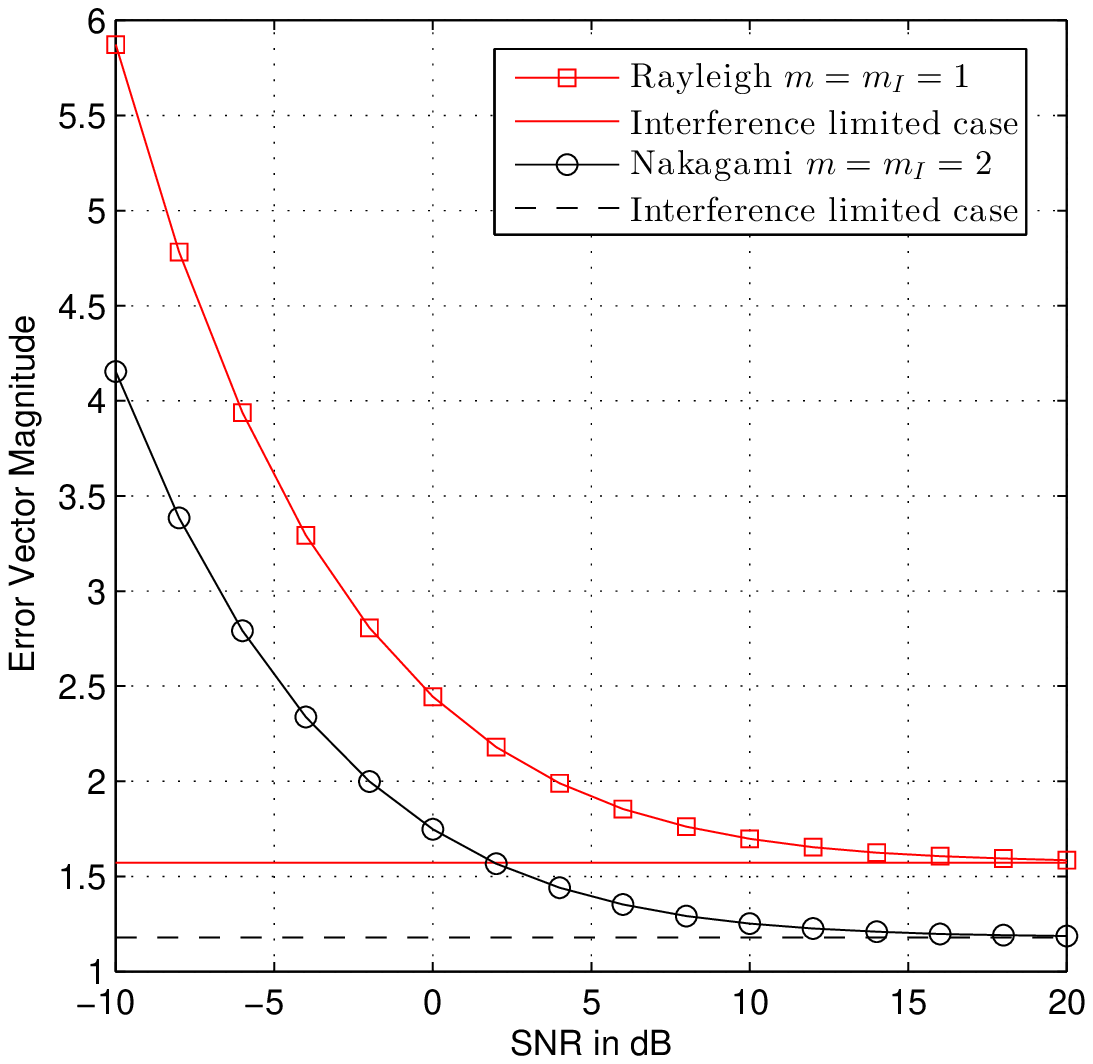}
\caption{EVM for Nakagami fading when both interference + AWGN and interference limited system are considered. Here $L$=1.}
\label{fig:fadingnakagamilimiting}
\end{figure}

\else

 \begin{figure}
\centering
\includegraphics[scale=1]{aug27_plots.eps}
\caption{Theoretical and simulated  EVM for different parameters of $\kappa$-$\mu$ shadowed fading when both interference and noise are present. Here $m_I$=1, $L$=1.}
\label{fig:kappamunoise}
\end{figure}

 \begin{figure}
\centering
\includegraphics[scale=1]{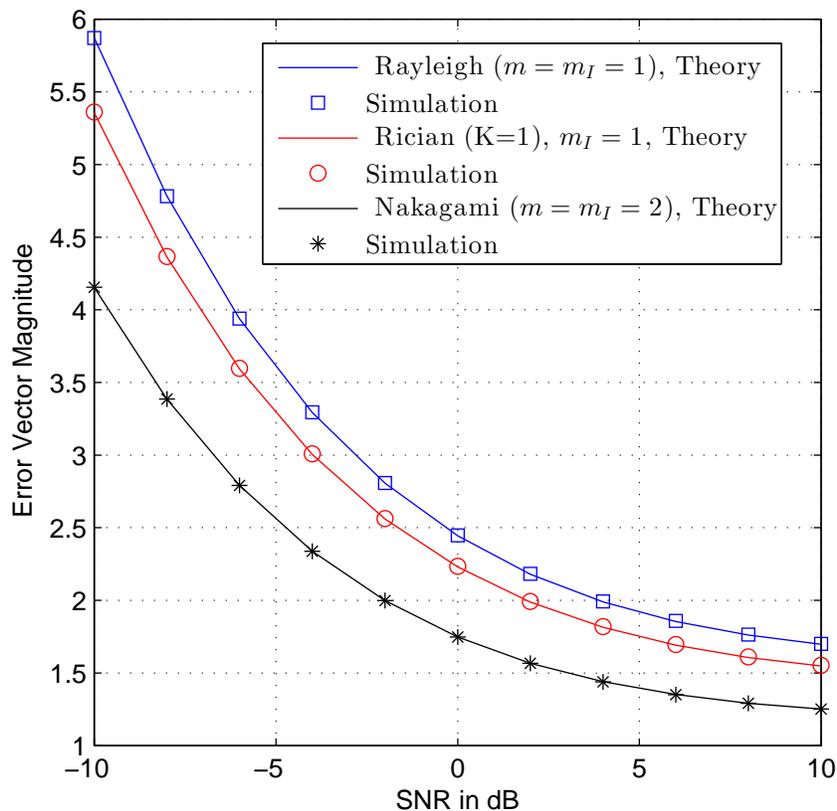}
\caption{Theoretical and simulated EVM for different fading distributions when both interference and noise are present. }
\label{fig:fadingnoise}
\end{figure}

 \begin{figure}
\centering
\includegraphics[scale=1]{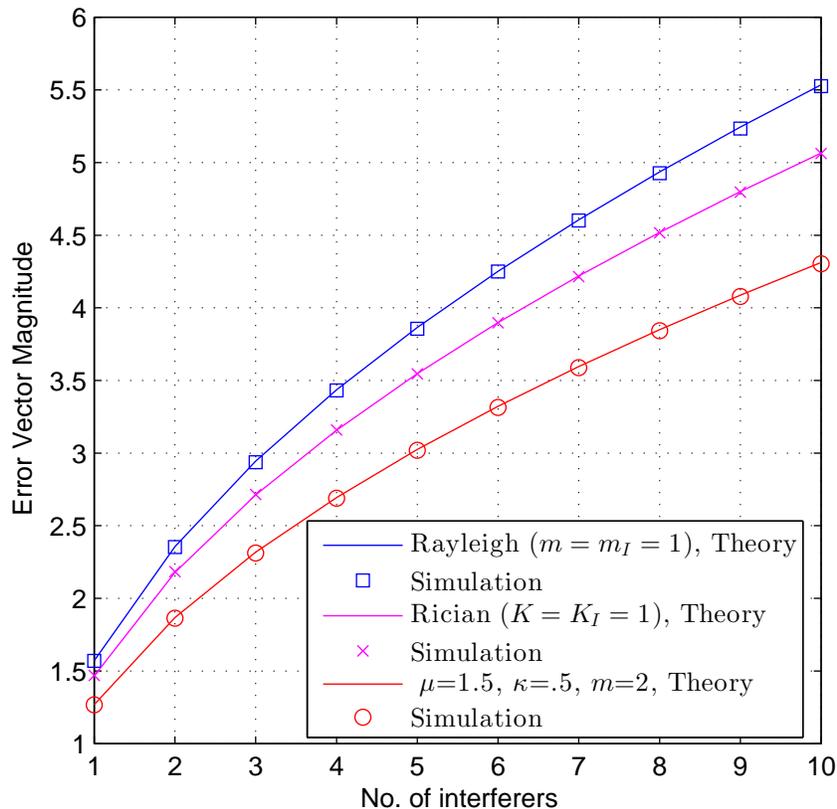}
\caption{Theoretical and simulated EVM for different fading distributions in an interference limited system  interferer signals experience i.i.d. fading.}
\label{fig:fadinginterference}
\end{figure}

\begin{figure}
\centering
\includegraphics[scale=1]{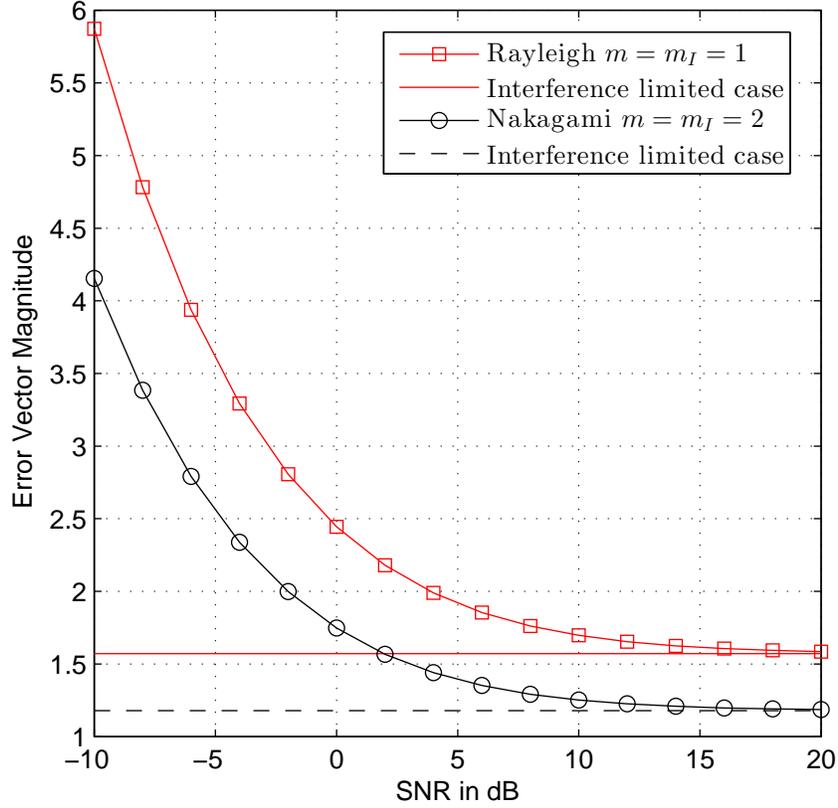}
\caption{EVM for Nakagami fading when both interference + AWGN and interference limited system are considered. Here $L$=1.}
\label{fig:fadingnakagamilimiting}
\end{figure}

\fi

\section{Conclusion}
In this paper, we have derived the closed form expression for EVM in an interference limited system when both the desired signal and interfering signals experience independent, non-identically distributed $\kappa$-$\mu$ shadowed fading. This has been simplified for many special cases such as Rayleigh, Nakagami, Rician, $\kappa$-$\mu$ fading etc. We have also shown that in an interference limited system, EVM is equal to the square root of number of interferers when the interferers and desired signal do not experience fading. EVM expression is also derived in the presence of AWGN when the desired signal experiences $\kappa$-$\mu$ shadowed fading and interferers experience independent, identically distributed Nakagami fading. Note that in this paper, we have studied the EVM only for SISO scenario. In future, it would be interesting to characterize the   EVM in multi-antenna systems and in the presence of correlated interferers.

\begin{appendices}
\section{}
\section*{Proof of Theorem \ref{thm1}}
\label{app:theorem1}
Here we derive the EVM of an interferer limited system when the desired signal experiences $\kappa$-$\mu$ shadowed fading of unit mean power with parameters ($\kappa$, $\mu$, $m$) and interferers experience independent non identical fading with the $l$th interferer experiencing $\kappa$-$\mu$ shadowed fading of unit mean power with parameters ($\kappa_l$, $\mu_l$, $m_l$).
Recall from (\ref{eqn:evm}) that EVM of an interference limited system is 
\begin{equation*}
\EVM=\int\limits_{0}^{\infty} \int\limits_{0}^{\infty} \sqrt{\frac{g_I}{g_d}} f_{g_I}(g_I)  f_{g_d}(g_d) \d g_I   \d g_d. 
\end{equation*}

As the double integral is in a separable form ,

\begin{equation}
\EVM=\int\limits_{0}^{\infty} \sqrt{g_I} f_{g_I}(g_I) \d g_I  \int\limits_{0}^{\infty} \sqrt{\frac{1}{g_d}} f_{g_d}(g_d) \d g_d. 
\label{eqn:evm_kappamu1}
\end{equation}

First we evaluate the second part of the above EVM expression \ie, $\E \left(\sqrt{\frac{1}{g_d}} \right)$. As the desired signal is $\kappa$-$\mu$ shadowed, from (\ref{eqn:gdpdf}) we know that 
\begin{equation*}
f_{g_d}(g_d)=\frac{\theta_{1}^{m-\mu}g_d^{\mu-1} } {\theta_{2}^{m}\Gamma(\mu)}e^{-\frac{g_d}{\theta_{1}}} {}_1 F_1 \left(m,\mu,\frac{(\theta_{2}-\theta_{1})g_d}{\theta_{1}\theta_{2}} \right),
\end{equation*}
where $\theta_1=\frac{1}{\mu(1+\kappa)}$, $\theta_2=\frac{\mu \kappa+m }{\mu(1+\kappa)m}$.
So  $\int\limits_{0}^{\infty} \sqrt{\frac{1}{g_d}} f_{g_d}(g_d) \d g_d$ is
\begin{equation}
\int\limits G_2 g_d^{\mu-1.5} e^{-\mu (1+\kappa) g_d} {}_1F_1 (m,\mu, \frac{\mu^2 \kappa (1+\kappa) g_d}{\mu \kappa+m}) \d g_d, 
 \label{eqn:Egdinverse}
\end{equation} 
 where $G_2=\frac{\mu^{\mu} m^m (1+\kappa)^{\mu}}{\Gamma(\mu)  (\mu \kappa+m)^m}.$

Substitute $t$=$\mu (1+\kappa) g_d$ in (\ref{eqn:Egdinverse}) and by using the identity in \cite[Page.17]{exton1976multiple} \ie, ${}_2F_1(a,b,c,x)=\frac{1}{\Gamma(a)}\int\limits_{0}^{\infty} e^{-t} t^{a-1} {}_1F_1(b;c;x t) \d t, Re(a)>0, $ $\int\limits_{0}^{\infty} \sqrt{\frac{1}{g_d}} f_{g_d}(g_d) \d g_d$ is

\begin{equation}
\frac{G_2   \Gamma(\mu-0.5) {}_2F_1 \left(\mu -\frac{1}{2},m,\mu,\frac{\mu \kappa}{m+ \mu \kappa} \right)}{\left(\mu(1+\kappa) \right)^{\mu-0.5}}, \text{for $\mu > $ 0.5}
 \label{eqn:g2_kappamu1}
 \end{equation}

Now we will evaluate the first part of the EVM expression \ie, $\E(\sqrt{g_I})$. Using the $\pdf$ of $g_I$ from (\ref{eqn:gIpdf}), $\int\limits_{0}^{\infty} \sqrt{g_I} f_{g_I}(g_I) \d g_I$ is
\ifCLASSOPTIONtwocolumn
\begin{equation*}
\int\limits_{0}^{\infty} G_1 g_I^{(\sum\limits_{l=1}^L \mu_l -0.5)} \Phi_2^{2 L}(\mu_1-m_1,.., \mu_L-m_L,m_1,..,m_L; \sum\limits_{l=1}^L \mu_l ; 
\end{equation*}
\begin{equation*}
-a_1  g_I,..,-a_L g_I,-b_1 g_I,..,-b_L g_I )   \d g_I,
\label{eqn:EgI}
\end{equation*}
\else
\begin{equation*}
\int\limits_{0}^{\infty} \frac{G_1 \Phi_2^{2 L}(\mu_1-m_1,.., \mu_L-m_L,m_1,..,m_L; \sum\limits_{l=1}^L \mu_l ; -a_1  g_I,..,-a_L g_I,-b_1 g_I,..,-b_L g_I ) }{  g_I^{-(\sum\limits_{l=1}^L \mu_l -0.5)} }\d g_I,
\label{eqn:EgI}
\end{equation*}
\fi
where $a_l=\mu_l(1+\kappa_l),$ $b_l=a_l \frac{m_l}{\mu_l \kappa_l +m_l},$ $G_1= \frac{ \prod\limits_{l=1}^L b_l^{m_l} a_l^{\mu_l-m_l}}{\Gamma(\sum\limits_{l=1}^L \mu_l)}$.

Using the transformation given in \cite[P.177]{exton1976multiple} that $e^{-x_i}\Phi_2^{(n)}(b_1, \cdots, b_n;c;x_1, \cdots,x_n)$ is equivalent to
$\Phi_2^{(n)}(b_1, \cdot,b_{i-1},c-b_1-\cdot-b_n,b_{i+1},\cdot, b_n;c;x_1-x_i,\cdots  x_{i-1}-x_i,-x_i,x_{i+1}-x_i,\cdot,x_n-x_i),$ $\int\limits_{0}^{\infty} \sqrt{g_I} f_{g_I}(g_I)  \d g_I$ is
\ifCLASSOPTIONtwocolumn
\begin{equation*}
\int\limits_{0}^{\infty} e^{-a_1 g_I} G_1 g_I^{(\sum\limits_{l=1}^L \mu_l -0.5)} \Phi_2^{2 L}(0,\mu_2-m_2,.., \mu_L-m_L,m_1,..,m_L; 
\end{equation*}
\begin{equation*}
\sum\limits_{l=1}^L \mu_l ; a_1  g_I, a_1 g_I-a_2 g_I,..,a_1 g_I-b_L g_I )   \d g_I .
\end{equation*}
\else
\begin{equation*}
\int\limits_{0}^{\infty} \frac{e^{-a_1 g_I} G_1 \Phi_2^{2 L}(0,\mu_2-m_2,.., \mu_L-m_L,m_1,..,m_L; \sum\limits_{l=1}^L \mu_l ; a_1  g_I, a_1 g_I-a_2 g_I,..,a_1 g_I-b_L g_I ) }{ g_I^{-(\sum\limits_{l=1}^L \mu_l -0.5)}} \d g_I .
\end{equation*}
\fi

Note that if one of the numerator parameters of the series expansion of $\Phi_2^{(L)}$  goes to zero, then $\Phi_2^{(L)}$
 becomes $\Phi_2^{(L-1)}$ and hence the above $\Phi_2^{(2L)}$ will become $\Phi_2^{(2L-1)}$ with appropriate parameters. So $\int\limits_{0}^{\infty} \sqrt{g_I} f_{g_I}(g_I)  \d g_I$ is
 \ifCLASSOPTIONtwocolumn
  \begin{equation*}
\int\limits_{0}^{\infty} e^{-a_1 g_I} g_I^{(\sum\limits_{l=1}^L \mu_l -0.5)}  G_1 \Phi_2^{2 L-1}(\mu_2-m_2,.., \mu_L-m_L,m_1,..,m_L; 
\end{equation*}
 \begin{equation}
\sum\limits_{l=1}^L \mu_l ; a_1 g_I-a_2 g_I,..,a_1 g_I-b_L g_I )  \d g_I.
\label{eqn:EgI2}
\end{equation}
 \else
 \begin{equation}
 \small
\int\limits_{0}^{\infty} \frac{e^{-a_1 g_I} G_1 \Phi_2^{2 L-1}(\mu_2-m_2,.., \mu_L-m_L,m_1,..,m_L; \sum\limits_{l=1}^L \mu_l ; a_1 g_I-a_2 g_I,..,a_1 g_I-b_L g_I ) }{ g_I^{-(\sum\limits_{l=1}^L \mu_l -0.5)}} \d g_I.
\label{eqn:EgI2}
\end{equation}
\fi
 
 In order to  simplify this, we use the following relationship between  confluent Lauricella function and Lauricella's function of the fourth kind \cite[P. 286, Eq. 43]{srivastava1985multiple} 
 \ifCLASSOPTIONtwocolumn
 \begin{equation*}
F_D^{(N)}[\alpha,\beta_1,\cdots,\beta_N;\gamma;x_1,\cdots, x_N]=
\label{eq:relationship}
\end{equation*}
\begin{equation*}
\frac{1}{\Gamma(\alpha)}\int\limits_{0}^{\infty}e^{-t}t^{\alpha-1}\Phi_2^{(N)}[\,\beta_1,\cdots,\beta_N,\gamma;x_1t,\cdots,x_Nt]\text{d}t.
\end{equation*}
 \else
\begin{equation*}
F_D^{(N)}[\alpha,\beta_1,\cdots,\beta_N;\gamma;x_1,\cdots, x_N]=\frac{1}{\Gamma(\alpha)}\int\limits_{0}^{\infty}e^{-t}t^{\alpha-1}\Phi_2^{(N)}[\,\beta_1,\cdots,\beta_N,\gamma;x_1t,\cdots,x_Nt]\text{d}t.
\label{eq:relationship}
\end{equation*}
\fi

By using the above identity and substituting $t$=$a_1 g_I$ in \eqref{eqn:EgI2}, $\int\limits_{0}^{\infty} \sqrt{g_I} f_{g_I}(g_I
) \d g_I$ is
\ifCLASSOPTIONtwocolumn
\begin{equation*}
\frac{G_1  \Gamma(\sum\limits_{l=1}^L \mu_l+0.5 )}{a_1^{(\sum\limits_{l=1}^L \mu_l+.5)}} F_D^{(2L-1)}(\sum\limits_{l=1}^L \mu_l+0.5, \mu_2-m_2,.,
\end{equation*}
\begin{equation}
 \mu_L-m_L,m_1, ., m_L,\sum\limits_{l=1}^L \mu_l, \frac{a_1-a_2}{a_1},.,\frac{a_1-b_L}{a_1}) 
\label{eqn:g1_kappamu1}
\end{equation}
\else
\begin{equation}
\small
 \frac{G_1 \Gamma(\sum\limits_{l=1}^L \mu_l+0.5 ) F_D^{(2L-1)}(\sum\limits_{l=1}^L \mu_l+0.5, \mu_2-m_2,., \mu_L-m_L,m_1, ., m_L,\sum\limits_{l=1}^L \mu_l, \frac{a_1-a_2}{a_1},.,\frac{a_1-b_L}{a_1})}{ a_1^{(\sum\limits_{l=1}^L \mu_l+.5)}}
\label{eqn:g1_kappamu1}
\end{equation}
\fi

From equations \eqref{eqn:evm_kappamu1}, (\ref{eqn:g2_kappamu1}) and (\ref{eqn:g1_kappamu1}) and using $\theta_{1l}=\frac{1}{a_l}$, $\theta_{2l}=\frac{1}{b_l}$, we get the result.

.\section{}
\section*{Proof of Corollary \ref{cor1}}
\label{app:cor1}
Here we derive the EVM of an interferer limited system when the desired signal experiences $\kappa$-$\mu$ shadowed fading of unit mean power with parameters ($\kappa$, $\mu$, $m$) and interferers experience independent identical $\kappa$-$\mu$ shadowed fading of unit mean power with parameters ($\kappa_I$, $\mu_I$, $m_I$).
As the interferer fading is identical, substitute $\theta_{1l}=\theta_{1I}$, $\theta_{2l}=\theta_{2I}$, $\mu_l=\mu_I,$ $\kappa_l=\kappa_I$, $m_l=m_I$ in (\ref{eqn:evmkappamushadowedint}). Let  
$a_I$=$\frac{1}{\theta_{1I}},$ $b_I=\frac{1}{\theta_{2I}}$. So the Lauricella function in  \eqref{eqn:evmkappamushadowedint} becomes
\ifCLASSOPTIONtwocolumn
\begin{equation*}
F_D^{(2L-1)}(L \mu _I +0.5, \mu _I-m _I,.., \mu _I -m _I, m _I, .., m _I,L \mu _I , 
\end{equation*}
\begin{equation}
 0,..,0,\frac{a_I-b_I}{a_I},..,\frac{a_I-b_I}{a_I}).
 \label{eqn:FD2L-1}
\end{equation}
\else
\begin{equation}
F_D^{(2L-1)}(L \mu _I +0.5, \mu _I-m _I,.., \mu _I -m _I, m _I, .., m _I,L \mu _I , 0,..,0,\frac{a_I-b_I}{a_I},..,\frac{a_I-b_I}{a_I}) .
\label{eqn:FD2L-1}
\end{equation}
\fi
If one of the  parameters of $F_D^{L}$ goes to zero, then it becomes $F_D^{L-1}$ with appropriate parameters. So (\ref{eqn:FD2L-1}) becomes
\begin{equation*}
F_D^L (L \mu _I +0.5, m_I,..,m_I, L \mu _I, \frac{a_I-b_I}
{a_I},..,\frac{a_I-b_I}{a_I}).
\end{equation*}
Now by using the identity \[F_D^{N}(d,b_1,b_2,..b_N,e,x,x,...,x)={}_2 F_1(d,\sum\limits_{i=1}^N b_i, e,x),\] and  $\frac{a_I-b_I}{a_I}=\frac{\mu _I \kappa _I}{\mu _I \kappa _I+m_I}$  we get 
\ifCLASSOPTIONtwocolumn
\begin{equation*}
F_D^L (L \mu _I +0.5, m_I,..,m_I, L \mu _I, \frac{a_I-b_I}
{a_I},..,\frac{a_I-b_I}{a_I}) =  
\end{equation*}
\begin{equation}
{}_2 F_1(L \mu _I +0.5, L m_I, L \mu _I, \frac{\mu _I \kappa _I}{\mu _I \kappa _I + m_I})  
\label{eqn:FD}
\end{equation}
\else
\begin{equation}
\small
F_D^L (L \mu _I +0.5, m_I,..,m_I, L \mu _I, \frac{a_I-b_I}
{a_I},..,\frac{a_I-b_I}{a_I}) = {}_2 F_1(L \mu _I +0.5, L m_I, L \mu _I, \frac{\mu _I \kappa _I}{\mu _I \kappa _I + m_I})  
\label{eqn:FD}
\end{equation}
\fi

Substituting (\ref{eqn:FD}) in (\ref{eqn:evmkappamushadowedint}) and using $\mu_l=\mu _I$, $\kappa_l=\kappa _I$, $m_l=m _I,$    $\forall l=1,..L,$ EVM is obtained.

\section{}
\section*{Proof of Corollary \ref{cor:kappamu}}
\label{app:kappamu}

Here we derive the EVM of an interference limited system when the desired signal experiences unit mean power $\kappa$-$\mu$ fading of parameters ($\kappa$, $\mu$) and the interferers experience identical and independent unit mean power $\kappa$-$\mu$ fading of parameters ($\kappa_I$, $\mu_I$).
From Table I we know that $\kappa$-$\mu$ fading is a special case of $\kappa$-$\mu$ shadowed fading that  can be obtained when $m \rightarrow \infty$ and $m_I$ $\rightarrow \infty$.  So by letting $m$ and $m_I$ approach infinity in (\ref{eqn:evmkappamushadowedintiid}) EVM can be obtained.
 
 So in (\ref{eqn:evmkappamushadowedintiid})  we have to evaluate the term  
\ifCLASSOPTIONtwocolumn
\begin{equation*}
 \left(\frac{m}{m+ \mu \kappa} \right)^{m } {}_2F_1 \left(\mu -.5,m,\mu,\frac{\mu \kappa}{m+ \mu \kappa} \right) 
\end{equation*}
\begin{equation*}
\left(\frac{m_I}{m_I+ \mu_I \kappa_I} \right)^{L m_I } {}_2F_1 \left(L \mu_I +.5, L m_I, L \mu_I , \frac{\mu_I \kappa_I}{\mu_I \kappa_I+m_I} \right),
\end{equation*}
\else
\begin{equation*}
\textstyle
\left(\frac{m_I}{m_I+ \mu_I \kappa_I} \right)^{L m_I } \left(\frac{m}{m+ \mu \kappa} \right)^{m } {}_2F_1 \left(\mu -.5,m,\mu,\frac{\mu \kappa}{m+ \mu \kappa} \right)   {}_2F_1 \left(L \mu_I +.5, L m_I, L \mu_I , \frac{\mu_I \kappa_I}{\mu_I \kappa_I+m_I} \right),
\end{equation*}
\fi
as $m$ and $m_I$ approach infinity.  We will evaluate the effect of $m \rightarrow \infty$ first and similar result is obtained when $m_I \rightarrow \infty$.   Using the series definition of Hypergeometric function \ie,  ${}_p F_q(a_1,...a_p, b_1,...b_q,x) = \sum\limits_{l=0}^{\infty} \frac{(a_1)_l... (a_p)_l}{(b_1)_l....(b_q)_l} \frac{x^l}{l !},$ where $y_l=y(y+1)...(y+l-1),$ we get
\ifCLASSOPTIONtwocolumn
\begin{equation*}
 \frac{ {}_2F_1 \left(\mu -0.5, m, \mu, \frac{\kappa \mu}{m+\kappa \mu} \right) }{ \left(\frac{m+\kappa \mu}{m} \right)^{m}} =   \frac{\sum\limits_{l=0}^{\infty} \frac{(m)_l (\mu-.5)_l}{(\mu)_l} \frac{(\frac{\kappa \mu}{m+\kappa \mu})^l}{l !}}{ \left(\frac{m+\kappa \mu}{m} \right)^{m}}.
 \end{equation*}
\else
\begin{equation*}
 \left(\frac{m}{m+ \kappa \mu} \right)^m {}_2F_1 \left(\mu -0.5, m, \mu, \frac{\kappa \mu}{m+\kappa \mu} \right) = \left(\frac{m}{m+ \kappa \mu} \right)^m  \sum\limits_{l=0}^{\infty} \frac{(m)_l (\mu-.5)_l}{(\mu)_l} \frac{(\frac{\kappa \mu}{m+\kappa \mu})^l}{l !}.
 \end{equation*}
 \fi

$(m)_l$ can also be written as $m^l(1+\frac{1}{m})....(1+\frac{l-1}{m})$.
So taking   $(\frac{m}{m+ \kappa \mu})^m$  within the summation, each term in the summation \ie, for a particular $l$ is 
\begin{equation*}
 \left(\frac{m}{m+\kappa \mu} \right)^{m+l} \frac{(\mu-0.5)_l (\kappa \mu)^l}{l! (\mu)_l} \left(1+\frac{1}{m} \right)....\left(1+\frac{l-1}{m}\right). 
 \end{equation*}

The above equation when $m \rightarrow \infty$ is 
\begin{equation*}
e^{-\kappa \mu} \frac{(\mu-0.5)_l (\kappa \mu)^l}{l! (\mu)_l}. 
\end{equation*}

So when  $m \rightarrow \infty$, 
\ifCLASSOPTIONtwocolumn
\begin{equation*}
 \frac{{}_2F_1 \left(m,\mu -0.5, \mu, \frac{\kappa \mu}{m+\kappa \mu} \right)}{\left(\frac{m+\kappa \mu}{m} \right)^m} =  e^{-\kappa \mu} \sum\limits_{l=0}^{\infty} \frac{(\mu-0.5)_l (\kappa \mu)^l}{l! (\mu)_l} .
\end{equation*}
\else
\begin{equation*}
\left(\frac{m}{m+ \kappa \mu} \right)^m {}_2F_1 \left(m,\mu -0.5, \mu, \frac{\kappa \mu}{m+\kappa \mu} \right) = e^{-\kappa \mu} \sum\limits_{l=0}^{\infty} \frac{(\mu-0.5)_l (\kappa \mu)^l}{l! (\mu)_l} .
\end{equation*}
\fi

From the definition of Hypergeometric function,  $\sum\limits_{l=0}^{\infty} \frac{(\mu-0.5)_l (\kappa \mu)^l}{l! (\mu)_l} ={}_1 F_1(\mu-0.5, \mu, \kappa \mu).$ Hence as $m \rightarrow \infty$,

\begin{equation}
 \frac{{}_2F_1 \left(\mu -0.5, m, \mu, \frac{\kappa \mu}{m+\kappa \mu} \right)}{\left(\frac{m+ \kappa \mu}{m} \right)^m} =  \frac{{}_1 F_1(\mu-0.5, \mu, \kappa \mu)}{e^{\kappa \mu}}.   
\label{eqn:limit1}
\end{equation}

Similarly, as $m_I \rightarrow \infty$,
\ifCLASSOPTIONtwocolumn
\begin{equation*}
\left(\frac{m_I}{m_I+ \kappa_I \mu_I} \right)^{m_I L} {}_2F_1 \left(L \mu_I +0.5, L m_I, L \mu_I, \frac{\kappa_I \mu_I}{m_I+\kappa_I \mu_I} \right) 
\end{equation*}
\begin{equation}
=e^{-\kappa_I \mu_I L} {}_1 F_1(L \mu_I +0.5, L \mu_I, \kappa_I \mu_I L).   
\label{eqn:limit2}
\end{equation}
\else
\begin{equation}
\textstyle
\left(\frac{m_I}{m_I+ \kappa_I \mu_I} \right)^{m_I L} {}_2F_1 \left(L \mu_I +0.5, L m_I, L \mu_I, \frac{\kappa_I \mu_I}{m_I+\kappa_I \mu_I} \right) = \frac{{}_1 F_1(L \mu_I +0.5, L \mu_I, \kappa_I \mu_I L)}{ e^{\kappa_I \mu_I L}}.   
\label{eqn:limit2}
\end{equation}
\fi
Substituting (\ref{eqn:limit1}) and (\ref{eqn:limit2}) in (\ref{eqn:evmkappamushadowedintiid}), and by using the transformation $e^{-z} {}_1F_1(b-a,b,z)={}_1F_1(a,b,-z)$, EVM is obtained.

\section{}
\section*{Proof of Theorem \ref{thm2}}
\label{app:theorem2}
Here we derive the EVM of a system where AWGN  is also present with desired signal experiencing unit mean power $\kappa$-$\mu$ shadowed fading of parameters ($\kappa$, $\mu$ , $m$) and the interferers experiencing identical and independent unit mean power Nakagami-$m$ fading. 
EVM in the presence of AWGN given in (\ref{eqn:evm}) is separable and hence $\int\limits_{0}^{\infty} \int\limits_{0}^{\infty} \sqrt{\frac{g_I + \sigma^2}{g_d}} f_{g_I} f_{g_d} \d g_I \d g_d $ can be expressed as
\begin{equation}
\int\limits_{0}^{\infty} \sqrt{g_I + \sigma^2} f_{g_I} \d g_I  \int\limits_{0}^{\infty} \sqrt{\frac{1}{g_d}} f_{g_d} \d g_d.
\label{eqn:evmdefnnoise}
\end{equation}

Recall that $\int\limits_{0}^{\infty} \sqrt{\frac{1}{g_d}} f_{g_d} \d g_d$ is derived in \eqref{eqn:g2_kappamu1} when the desired signal is $\kappa$-$\mu$ shadowed and it is given as
\begin{equation}
  \frac{ \sqrt{\mu(1+\kappa)} \Gamma(\mu-0.5) {}_2F_1 \left(\mu -\frac{1}{2},m,\mu,\frac{\mu \kappa}{m+ \mu \kappa} \right)}{\Gamma(\mu) \left(\frac{m+\mu \kappa}{m} \right)^m}. 
\label{eqn:g2_kappamu2}
\end{equation}

Interferer powers are i.i.d Gamma distributed of shape and scale parameters $m_I$ and $\frac{1}{m_I}$. So $g_I$ which is a sum of $L$ interferer powers, is also Gamma distributed of shape and scale parameters $m_I L$ and $\frac{1}{m_I}$. Hence $\int\limits_{0}^{\infty} \sqrt{g_I + \sigma^2} f_{g_I} \d g_I $ can be expressed as 

\begin{equation*}
  \int\limits_{0}^{\infty} \sqrt{g_I + \sigma^2} \left(\frac{g_I^{m_I L-1} e^{-g_I m_I } m_I^{m_I L} }{\Gamma(m_I L) } \right) \d g_I. 
 \end{equation*}

Substituting $g_I=t\sigma^2$, $\int\limits_{0}^{\infty} \sqrt{g_I + \sigma^2} f_{g_I} \d g_I $ becomes

\begin{equation*}
\sigma \int\limits_{0}^{\infty} \sqrt{t+1} \left(\frac{(t \sigma^2)^{m_I L-1} e^{-t \sigma^2 m_I } m_I^{m_I L} }{\Gamma(m_I L) } \right) \sigma^2 \d t. 
 \end{equation*}
 
 As the Tricomi confluent hypergeometric function is defined as \[U(a,b,z)= \frac{1}{\Gamma(a)}\int\limits_{0}^{\infty} e^{-z t} t^{a-1} (t+1)^{-a+b-1} \d t,\]
 
\begin{equation}
\int\limits_{0}^{\infty} \sqrt{g_I + \sigma^2} f_{g_I} \d g_I=   \frac{\sigma U \left(m_I L, \frac{3}{2} + m_I L , \sigma^2 m_I\right)}{\left(\sigma^2 m_I \right)^{-m_I L}}.
\label{eqn:g1_kappamu2}
\end{equation}

From equations (\ref{eqn:evmdefnnoise}), (\ref{eqn:g2_kappamu2}) and (\ref{eqn:g1_kappamu2}) we get the result.

\end{appendices}

\bibliographystyle{IEEEtran}
\bibliography{References1}

\end{document}